\documentclass[12pt,reqno]{amsart}




\usepackage{etoolbox}
%
%
\newbool{bForSubmission}
\booltrue{bForSubmission}
\usepackage[authoryear]{natbib}

%
%
\newbool{bExperimental}

%
%
\newbool{bForUs}

\usepackage{amscd,amsmath}
\usepackage{mathrsfs}
\usepackage{amsfonts}
\usepackage{float}


\usepackage{amssymb, bm, xspace}
\usepackage{enumerate}

\usepackage{enumitem}


\usepackage[OT2,OT1]{fontenc}

\usepackage{datetime}

\usepackage{enumitem}


\usepackage[dvipsnames]{xcolor}

\usepackage[margin=1.2in]{geometry}

\usepackage{mathtools}

\mathtoolsset{centercolon}

\usepackage{lingmacros}
\usepackage{tree-dvips}
\usepackage{amsmath}
\usepackage{colonequals}
\usepackage{graphicx} 
\usepackage{plain}
\usepackage[T1]{fontenc} 
\linespread{1.2}
\usepackage{amssymb}

\usepackage{amsthm}
\usepackage{amscd}
\usepackage{indentfirst}
\usepackage{graphicx}
\usepackage{bm}
\usepackage{caption}



\newtheorem{Example}{Example}[section]

\newtheorem{Proposition}{Proposition}[section]

\newcommand{\R}{{\mathbb R}}

\renewcommand{\[}{\left[}








\newcommand{\beq}{\begin{equation}}
	\newcommand{\eeq}{\end{equation}}
\newcommand{\beqna}{\begin{eqnarray*}}
	\newcommand{\eeqna}{\end{eqnarray*}}
\newcommand{\beqn}{\begin{equation*}}
	\newcommand{\eeqn}{\end{equation*}}
\newcommand{\bp}{\begin{proof}}
	\newcommand{\ep}{\end{proof}}
\newcommand{\bprop}{\begin{proposition}}
	\newcommand{\eprop}{\end{proposition}}
\newcommand{\bt}{\begin{theorem}}
	\newcommand{\et}{\end{theorem}}
\newcommand{\bex}{\begin{Example}}
	\newcommand{\eex}{\end{Example}}
\newcommand{\bc}{\begin{corollary}}
	\newcommand{\ec}{\end{corollary}}
\newcommand{\bcl}{\begin{claim}}
	\newcommand{\ecl}{\end{claim}}
\newcommand{\bl}{\begin{lemma}}
	\newcommand{\el}{\end{lemma}}


\usepackage{stackengine}
\stackMath

\usepackage{tcolorbox}
\usepackage{tikz}

%
%
\newbool{HaveBBM}
\booltrue{HaveBBM}

\ifbool{HaveBBM}{
	\usepackage{bbm}
    }
    {
    }

%
%

\usepackage[pagebackref=true, colorlinks=true, citecolor=blue]{hyperref}

\usepackage{cleveref}

\tcbuselibrary{skins}
\usetikzlibrary{shadings}
\tcbset{
    myimage/.style={
        enhanced,
        overlay={
            \begin{scope}[shift={([xshift=1mm, yshift=7mm]frame.north west)}]
            \end{scope}}}}

\tcbset{
    skin=enhanced,
    fonttitle=\bfseries,
    interior style={white},
    segmentation style={black,solid,opacity=0.2,line width=1pt}}

\newtcolorbox{TitledBox}[2][]{
    myimage,              
    coltitle=black,       
    colbacktitle=white,   
    title=My title,
    attach boxed title to top center={
        yshift=-3mm,
        yshifttext=-1mm},
    attach boxed title to top left={
        xshift=1cm,
        yshift=-2mm},
    boxed title style={
        size=small},
    title={#2},#1}

\newcommand{\TwoColumn}[6]%
{%
\begin{minipage}[#3]{#1\textwidth}%
#5%
\end{minipage}%
\begin{minipage}[#4]{#2\textwidth}%
#6%
\end{minipage}%
}

\newcommand{\TwoColumnTop}[4]%
{%
\begin{minipage}[t]{#1\textwidth}
#3
\end{minipage}%
\begin{minipage}[t]{#2\textwidth}
#4
\end{minipage}%
}

\newcommand{\PullMarginsIn}[1]%
{%
\begin{minipage}[c]{0.3\textwidth}%
\phantom{x}%
\end{minipage}%
\begin{minipage}[c]{0.4\textwidth}%
#1%
\end{minipage}%
\begin{minipage}[c]{0.3\textwidth}%
\phantom{x}%
\end{minipage}%
}



%
%
\newcommand{\Comment}[1]{{\color{Brown}#1}}
\newcommand{\OptionalDetails}[1]{
    \ifbool{bForSubmission}
        {%
        }%
        {\begin{quote}\Comment{\footnotesize
        \medskip
        
        \noindent\textbf{Details not for submission}: \\
        \noindent#1}
        \end{quote}
        }
    }


%
%

%
%

\newbool{arXivFormat}
\boolfalse{arXivFormat}
    
\newcommand{\IfarXivElse}[2]{
    \ifbool{arXivFormat}
        {#1}{#2}
    }

\renewcommand{\mathbf}[1]{\bm{#1} \textbf{ *** Use bm instead of mathbf ***}}

\setcounter{page}{1}

\newcommand{\eqn}{\begin{eqnarray}}
\newcommand{\een}{\end{eqnarray}}

\newtheorem{theorem}{Theorem}[section]
\newtheorem*{theorem*}{Theorem}				

\newtheorem{lemma}[theorem]{Lemma}

\newtheorem{remark}[theorem]{Remark}
\newtheorem*{remark*}{Remark}

\numberwithin{equation}{section}


%
%
%



\newcommand{\CharFunc}{
    \ifbool{HaveBBM}{
        {\ensuremath{\mathbbm{1}}}
        }
        {
        {\ensuremath{\bm{1}}}
        }
    }

 %
 %

 %
 %
 %
 %
 %
 %
 %
 %
 %
 %
 %
 %
 %
 %
 %
 %
 %
 %
 %





\newcommand\tenq[2][1]{%
	\def\useanchorwidth{T}%
	\ifnum#1>1%
		\stackunder[0pt]{\tenq[\numexpr#1-1\relax]{#2}}{\scriptscriptstyle\sim}%
	\else%
		\stackunder[1pt]{#2}{\scriptscriptstyle\sim}%
	\fi%
	}



%
%

\renewcommand{\epsilon}{\varepsilon}
\newcommand{\Ignore}[1]{}

%
%
\usepackage{todonotes}
\newcommand{\Obsolete}[1]{}





\newcommand{\Experimental}[1]%
{%
\ifbool{bExperimental}%
	{\bigskip
	\noindent
	\textbf{\color{Brown}*** Start experimental}\\
	#1
}
{
}
}

%
%
\definecolor{Correction}{named}{red}



%
%
%

%
%


\crefname{cor}{Corollary}{Corollaries} 
									   
\crefname{lemma}{Lemma}{Lemmas}	       

\crefname{section}{Section}{Sections}
\Crefname{section}{Section}{Sections}

\crefname{appendix}{Appendix}{Appendices}
\Crefname{appendix}{Appendix}{Appendices}

\crefname{theorem}{Theorem}{Theorems}
\Crefname{theorem}{Theorem}{Theorems}

\crefname{prop}{Proposition}{Propositions}
\Crefname{prop}{Proposition}{Propositions}

\crefname{conj}{Conjecture}{Conjectures}
\Crefname{conj}{Conjecture}{Conjectures}

\crefname{definition}{Definition}{Definitions}
\Crefname{definition}{Definition}{Definitions}

\crefname{remark}{Remark}{Remarks}
\Crefname{remark}{Remark}{Remarks}

\crefname{assumption}{Assumption}{Assumptions}
\Crefname{assumption}{Assumption}{Assumptions}

\crefformat{equation}{(#2#1#3)}
\crefrangeformat{equation}{(#3#1#4) through (#5#2#6)}
\crefmultiformat{equation}
    {(#2#1#3)}%
    { and~(#2#1#3)}
    {, (#2#1#3)}
    { and~(#2#1#3)}










\def\XXint#1#2#3{{\setbox0=\hbox{$#1{#2#3}{\int}$ }
\vcenter{\hbox{$#2#3$ }}\kern-.6\wd0}}

%
%

%
%
\allowdisplaybreaks

%
%
\begin{document}
\newdateformat{mydate}{\THEDAY~\monthname~\THEYEAR}

\title
	[Model of an enzyme]
	{Enzyme-Substrate Complex Formation Modulates Diffusion-Driven Patterning in Metabolic Pathways }


\author[F. Farivar]
{Faezeh Farivar$^{1}$}
\address{$^1$  Institute of Atmospheric Sciences and Climate (CNR-ISAC), Rome, Italy} 
\email{faezehfarivar@cnr.it}


\begin{abstract}
Spatial organization in metabolic pathways can arise from the interplay between enzymatic reaction kinetics and diffusion-driven instabilities. In this work we investigate how reversible enzyme--substrate binding influences pattern formation in a two-step metabolic pathway. Starting from a mechanistic description in which the substrate reversibly binds to the first enzyme before catalytic conversion, we formulate a three-species reaction--diffusion system that explicitly incorporates the enzyme--substrate complex.

We first analyse the homogeneous dynamics and determine the unique steady state of the kinetic system. Exploiting the separation of time scales between the rapid binding kinetics and the slower evolution of metabolite concentrations, we derive a reduced two-variable model using a quasi-steady-state approximation for the enzyme--substrate complex. This reduction preserves the essential nonlinear coupling between catalytic reactions and spatial transport.

Linear stability and weakly nonlinear analysis reveals conditions for diffusion-driven (Turing) instability and shows that reversible enzyme binding significantly modifies the location and extent of the instability region compared to models with effective kinetics. Numerical simulations confirm the analytical predictions and demonstrate how enzyme--substrate interactions reshape pattern selection and slow the emergence of spatial heterogeneity.

These results provide a mechanistic link between enzyme binding kinetics, diffusion-driven pattern formation, and mesoscale metabolic organization. The proposed framework offers a tractable approach for studying spatial patterning in enzymatic networks and may help explain the emergence of structured biochemical domains such as those associated with liquid--liquid phase separation.
\end{abstract}

\maketitle

\markleft{F. FARIVAR}

\vspace{-2.5em}

\begin{center}
\medskip
Compiled on {\dayofweekname{\day}{\month}{\year} \mydate\today} at \currenttime
		
\end{center}

\bigskip

\vspace{-2.0em}

\setcounter{tocdepth}{1}
{
\renewcommand\contentsname{}	

\tableofcontents
}

\normalsize

%


%

\section{Introduction}

Understanding the complex interactions in biochemical reaction networks is essential for uncovering the principles of cellular function and metabolic regulation. Among the key features of these systems are reaction-diffusion processes, in which chemical species diffuse through space and interact in nonlinear ways. These interactions can give rise to striking spatial organizations, most notably Turing patterns, which are periodic structures that emerge spontaneously from homogeneous initial conditions when reaction kinetics and diffusion rates satisfy certain instability criteria. Since the pioneering work of Turing (\cite{turing1952chemical}), reaction-diffusion mechanisms provide a robust mathematical framework for the
generation of spatial patterns in biological systems (\cite{maini2012turing, maini2019turing}).
 ranging from chemical oscillators   (\cite{epstein2012chemical}, \cite{gambino2015effects}) to developmental patterning in multicellular systems (\cite{furusawa1998emergence, giunta2021pattern}).  

Metabolic pathways in living organisms provide a particularly rich setting for studying such processes. These pathways consist of sequences of enzymatic reactions that convert substrates into intermediates and final products, often with multiple branches and feedback loops (\cite{hafner2021nicepath}, \cite{lee2025biochemical}, \cite{hill2015metabolomics}). Spatial aspects of metabolism have received growing attention in recent years, as it has become clear that enzymes and metabolites are not always homogeneously distributed in the cytoplasm. Instead, cells frequently organize enzymes into clusters or into \emph{biomolecular condensates} through liquid--liquid phase separation (LLPS) (\cite{milicevic2022emerging}, \cite{li2025peptide}, \cite{chen2022liquid}). Such mesoscale organization can accelerate metabolic fluxes, buffer fluctuations, and regulate competition at branch points (\cite{bevilacqua2024enzyme}, \cite{lu2021emerging}, \cite{lim2024phase}, \cite{krainer2021reentrant, liu2023liquid}).
Liquid‒liquid phase separation is a fundamental
mechanism of sepsis (\cite{chen2025liquid}).
 From a theoretical perspective, reaction-diffusion models offer a natural framework to investigate how enzyme clustering and LLPS influence the spatial dynamics of metabolism (\cite{castellana2014enzyme,  mukherjee2024reaction}, \cite{buchner2013clustering, cieza2022investigation, nakasone2021time, rigano2022models, kim2024reaction}).  For instance, liquid-liquid phase separation (LLPS) on cell membranes has been described using reaction-diffusion models, which demonstrate that enzymatic feedback loops may drive out-of-equilibrium dynamics that replicate classical phase separation properties as nucleation, coarsening, and domain creation \cite{rigano2022models}. A straightforward reaction-diffusion framework called the swarm model has been put out to explain the molecular condensation processes that occur in cells.  It demonstrates how these condensates emerge out of equilibrium and may be controlled by nucleation and cooperativity dynamics. It captures important aspects of membraneless organelle development, including nucleation, clustering, diffusion, and particle exchange (\cite{cieza2022investigation}).
 In this work, we introduce Models~1, 2, and~3 to investigate the conditions for Turing pattern formation in enzyme-mediated metabolic pathways.
The Table \ref{tab:par-model1-2} summarizing all parameters contained in Models 1 and 2 followed as:
\begin{table}[H]
\begin{small}
\centering
\begin{tabular}{ll}
\hline
\textbf{Parameter} & \textbf{Description (with units)} \\
\hline
$C_0, C_1$ & Substrate and intermediate concentrations \; [$\mu$M] \\
$C_{E_1S_0}$ & Enzyme--substrate complex concentration \; [$\mu$M] \\[4pt]
$n_1, n_2$ & Effective enzyme levels (activity or concentration) \; [dimensionless] \\[4pt]
$\alpha_0$ & Relaxation rate toward $C_0^*$ \; [s$^{-1}$] \\
$C_0^*$ & Reference substrate level \; [$\mu$M] \\[4pt]
$k_1, k_2$ & Catalytic prefactors for $E_1$, $E_2$ \; [$\mu$M$^{-1}$ s$^{-1}$] \\[4pt]
$k_{a_1}$ & Association (binding) rate of $E_1$ to $S_0$ \; [$s^{-1}$] \\
$k_{d_1}$ & Dissociation rate of $E_1S_0$ \; [s$^{-1}$] \\
$k_{cat}$ & Catalytic turnover rate to $S_1$ \; [s$^{-1}$] \\[4pt]
$\beta$ & Degradation rate of $C_1$ \; [s$^{-1}$] \\[6pt]
$D_0, D_1, \gamma_0, \gamma_1$ &
Diffusion coefficients of $C_0$ and $C_1$ \; [$\mu$m$^2$/s] \\
$\gamma_{\mathrm{comp}}$ &
Diffusion coefficient of enzyme-substrate complex \; [$\mu$m$^2$/s] \\[4pt]
$d,\ \gamma$ & Cross-diffusion coupling strengths \; [$\mu$m$^2/(\mu M s)$] \\
\hline
\end{tabular}
\caption{\textit{Summary of parameters used in Models~1 and~2 with physical units.}}
\label{tab:par-model1-2}
\end{small}
\end{table}
Unless otherwise stated, all parameters appearing in the models and numerical simulations use the dimensional units listed in Table~\ref{tab:par-model1-2}.

The novelty of this work lies in combining mechanistic enzyme–substrate kinetics
with diffusion and cross-diffusion to study the emergence of spatial metabolic
organization. We introduce a reaction-diffusion formulation that
explicitly incorporates reversible enzyme-substrate complex formation and derive a
reduced model via a quasi-steady state approximation (QSSA) that preserves the essential
nonlinear feedback between catalysis and spatial transport. By comparing this model
with a classical simplified enzymatic pathway, we demonstrate how complex formation
shifts the homogeneous steady state, alters relaxation dynamics, and significantly
modifies the Turing instability region. To our knowledge, this is the first analysis
quantifying how enzymatic binding kinetics reshape diffusion-driven instabilities in
two-step metabolic pathways, providing a mechanistic link between enzyme-mediated
feedback and mesoscale organization such as liquid–liquid phase separation.
\subsection{\textbf{Model 1: The primary framework.}}  
As a starting point, we consider a simplified two-step metabolic pathway
(\cite{castellana2014enzyme}):
$
S_0 \xrightarrow{E_1} S_1 \xrightarrow{E_2} P,
$
where  $S_0$, $S_1$, and $P$ denote substrate, intermediate, and final
product, respectively. The substrate $S_0$ is converted to the
intermediate $S_1$ by enzyme $E_1$, which in turn is converted to $P$
by enzyme $E_2$. Following the phenomenological formulation of
(\cite{castellana2014enzyme}), we model the
spatio-temporal evolution of the substrate and intermediate as
\begin{equation}\label{primary model}
\begin{cases}
\dfrac{\partial C_0}{\partial t}
   =-\alpha_0(C_0-C_0^*)-k_1n_1C_0
     +D_0\nabla^2C_0 - d\,C_0\nabla^2C_1,\\[6pt]
\dfrac{\partial C_1}{\partial t}
   =-k_2n_2C_1-\beta C_1
     +k_1n_1C_0 + D_1\nabla^2C_1.
\end{cases}
\end{equation}
Here $C_0$ and $C_1$ denote the concentrations of $S_0$ and $S_1$, while
$n_1$ and $n_2$ represent fixed enzyme levels. The parameter
$\alpha_0$ drives $C_0$ toward its homeostatic value $C_0^*$,
$k_1n_1$ and $k_2n_2$ are effective catalytic and consumption rates,
and $\beta$ is the degradation rate of $C_1$. For $\beta=0$, the
intermediate can accumulate indefinitely, eliminating steady-state
flux through the pathway.

\medskip
\textbf{\textit{Curvature-driven cross-diffusion.}}
The cross–diffusion term appearing in the equation for $C_0$ models the influence of spatial variations of the intermediate metabolite $C_1$ on the transport of the substrate. Such effects arise naturally when the flux of one species depends on gradients of another species.

Specifically, we assume that the flux of the substrate has the form
$$
J_0 = -D_0 \nabla C_0 + d C_0 \nabla C_1,
$$
where the first term represents standard Fickian diffusion and the second term represents a drift induced by spatial gradients of the intermediate metabolite.

Applying the conservation law $\partial_t C_0 = -\nabla \cdot J_0$ yields
$$
\partial_t C_0
= D_0 \nabla^2 C_0 - d \nabla \cdot (C_0 \nabla C_1).
$$

For analytical simplicity, we approximate the divergence term by retaining the dominant curvature contribution, leading to the cross–diffusion term $- C_0 \nabla^2 C_1$ used in system~\eqref{Modified model}. This formulation captures the influence of spatial variations of $C_1$ on the transport of $C_0$ while preserving a tractable mathematical structure. Cross-diffusion mechanisms of this type have been extensively studied in
biological pattern-formation models, where they represent directed transport
induced by concentration gradients or local interactions (\cite{hillen2009user}).

This coupling does not depend on the gradient of $C_1$ but on its
\emph{curvature}: at locations where $C_1$ attains a local maximum,
$\nabla^{2}C_1<0$ and therefore $-d\,C_0\nabla^2 C_1>0$, implying that
$C_0$ increases at such sites. Conversely, $C_0$ decreases near local minima
($\nabla^{2}C_1>0$). Thus, substrate accumulates preferentially near
regions where the intermediate forms concave peaks, producing an effective
recruitment mechanism without invoking direct molecular attraction.

Such curvature-dependent transport is consistent with mesoscale
biophysical processes. Curvature-sensing proteins in membranes preferentially
bind to negatively curved regions (\cite{arnold2025bending}), and curvature
gradients can drive molecular fluxes in cells and tissues
(\cite{schamberger2023curvature}). Analogous behavior occurs in LLPS
condensates and metabolon-like assemblies, where spatial variations in
viscosity, binding affinity, and chemical potential bias metabolite
transport (\cite{castellana2014enzyme,zwicker2017growth}). The term
$-d\,C_0\nabla^{2}C_1$ therefore provides a minimal and biophysically
motivated representation of curvature-driven substrate recruitment.

LLPS proceeds through nucleation, maturation, and functional biochemical
activity (\cite{hyman2014liquid, banani2017biomolecular}). Model~1 corresponds to the latter two stages: we assume that a
condensate has already formed and maintains approximately constant
internal enzyme concentrations $n_1$ and $n_2$, while the metabolites
$C_0$ and $C_1$ remain mobile and diffuse into, out of, and within the
condensed phase. The curvature-driven term $-d\,C_0\nabla^2 C_1$ captures
substrate recruitment into regions of high intermediate accumulation,
a hallmark of the mature, functional stage of LLPS-mediated metabolic
organization.

Model~1 thus serves as a coarse-grained baseline for studying generic
reaction–diffusion feedbacks in enzyme-containing LLPS compartments and
provides the foundation upon which mechanistic extensions (Model~2) can
be built.

\noindent
\subsection{\textbf{Model 2: Explicit enzyme–substrate interactions.}}
Model~1 captures basic spatial coupling but treats catalysis as an effective one-step process and does not resolve the transient enzyme–substrate complex. In cellular environments, however, enzymes bind substrates reversibly before turnover (\cite{keener2009mathematical}), and the mobility of such complexes—especially inside co-localized assemblies such as liquid–liquid phase-separated (LLPS) condensates—may differ substantially from that of free metabolites. 

To account for these effects, we introduce an additional dynamical variable representing the enzyme–substrate complex $C_{E_1S_0}$. The underlying biochemical scheme,
$$
E_1 + S_0 \xrightleftharpoons[K_{d_1}]{k_{a_1}} E_1S_0 \xrightarrow{k_{cat}} E_1 + S_1,
$$
motivates the extended reaction–diffusion system
\begin{equation}\label{Modified model}
\begin{cases}
\dfrac{\partial C_0}{\partial t}
   =-\alpha_0(C_0-C_0^*)
     -k_{a_1}C_{E_1S_0}
     +\gamma_0\nabla^2C_0
     -\gamma\ C_0\nabla^2C_1,\\[6pt]
\dfrac{\partial C_{E_1S_0}}{\partial t}
   =k_{a_1}n_1 C_{0}
     -(k_{d_1}+k_{cat})C_{E_1S_0}
     +\gamma_{\mathrm{comp}}\nabla^2C_{E_1S_0},\\[6pt]
\dfrac{\partial C_1}{\partial t}
   =-(k_2n_2+\beta)\,C_1
     +k_{cat}C_{E_1S_0}
     +\gamma_1\nabla^2 C_1,
\end{cases}
\end{equation}

\textbf{Remark.}
Equation~\eqref{Modified model} is not intended as a strict microscopic mass-action description. 
Instead, $C_{E_1S_0}$ acts as an \emph{effective complex-density field} summarizing local co-localization of enzyme and substrate—for example within LLPS condensates or metabolon clusters. 
The terms $k_{a_1}C_0$ and $k_{d_1}C_{E_1S_0}$ therefore represent coarse-grained recruitment and release processes characteristic of mesoscale organization (\cite{castellana2014enzyme,zwicker2017growth}).

Model~2 extends the phenomenological dynamics of Model~1 by introducing an explicit intermediate while preserving the same basic mechanism: feedback between catalytic activity and spatial metabolite organization. 
Model~1 provides a minimal two-species description, whereas Model~2 offers a more mechanistic formulation linking enzyme binding, catalytic turnover, and condensate-mediated localization.

\medskip
\noindent
\medskip
\noindent
\subsection{\textbf{Model 3: Reduced  formulation.}}

While explicitly including the enzyme--substrate complex enriches the model
biologically, it also increases its dimensionality and analytical complexity.
In particular, the three--species formulation introduces an additional
timescale associated with the rapid formation and dissociation of the
enzyme--substrate complex.

To enable direct comparison with the primary two-variable pathway and to
facilitate analytical progress, we derive a reduced two-variable model
(Section~\ref{Sec:non-deg-model}) by eliminating the intermediate complex
through a quasi-steady-state approximation (QSSA). This reduction is
motivated by the assumption that the enzyme-substrate complex relaxes
rapidly to equilibrium relative to the slower evolution of the substrate
concentrations.

Under this approximation, the complex concentration is expressed
algebraically in terms of the substrate, leading to a closed system for the
substrate and intermediate species. The resulting reduced model preserves
the mechanistic effects of enzyme binding and catalytic conversion through
renormalized reaction coefficients, while retaining the same
cross--diffusion structure that captures spatial coupling effects.

This formulation provides a simplified yet mechanistically consistent
description of the pathway dynamics, enabling direct comparison with the
primary model and supporting the subsequent analysis of diffusion--driven
instabilities.

Throughout this work we impose for all models and all parameters are primary considered non-negative and  the homogeneous Neumann (zero–flux) boundary
conditions.
These conditions ensure that mass cannot be lost through the boundary and reflect the fact
that metabolites and enzyme–substrate complexes do not cross the boundary of the system.

Although the numerical simulations presented in Section~\ref{sec:num-results} are performed in one spatial
dimension, the reaction–diffusion mechanism is not restricted to 1D. The same
chemical interactions and diffusion laws apply in two and three dimensions, and the
Turing conditions are dimension-independent: what changes is only the geometry of
the domain and the admissible wave numbers. In higher dimensions, additional
pattern morphologies (spots, stripes, labyrinths, hexagons)(\cite{murray2007mathematical}, \cite{gambino2013pattern, gambino2019pattern}) may occur, but the
biochemical interpretation—enzymatic feedback coupled with substrate/intermediate
mobility—remains unchanged. Thus, the 1D setting captures the essential
instability mechanism while allowing clearer analytical and numerical comparison
between the simplified and reduced models.

The remainder of this paper is organized as follows. In Section~\ref{sec:primarymodel}, we analyze the local dynamics of the primary model and derive the conditions for Turing instability driven by diffusion and cross-diffusion effects. In Section~\ref{sec:model2-3}, we examine the equilibrium structure of the extended three-variable system, showing that it is simplified. To address this,  we employ a quasi-steady state approximation (\cite{keener2009mathematical, murray2007mathematical}) to obtain a reduced  model, for which we investigate both local stability and the onset of Turing instability. Section~\ref{sec:turinregion} explores how variations in reaction and diffusion parameters affect the extent of Turing regions in Models~1 and~3. In Section~\ref{sec:wnl}, we apply weakly nonlinear (WNL) analysis to the primary and reduced models to derive amplitude equations describing the nonlinear evolution of spatial patterns. These analytical predictions are then compared with numerical simulations in Section~\ref{sec:num-results}, confirming the validity of the theoretical results. Finally, Section~\ref{sec:conclu} presents concluding remarks and discusses the broader implications of our findings for understanding spatial regulation in enzyme-mediated and phase-separated biochemical systems.

\section{Simplified Two-Step Enzyme Model}\label{sec:primarymodel}
 In this section, we present linear analysis of the primary model \eqref{primary model} around the unique steady state of the system. For all positive non-negative parameters, this system has an unique coexistence steady state $E^*=(C_0^E, C_1^E)$ that is:
 \begin{equation}\label{equlibrium}
  C_0^E=\dfrac{\alpha_0 C_0^*}{\alpha_0+k_1n_1}, \;\; C_1^E=\dfrac{k_1n_1\alpha_0C_0^*}{(k_2n_2+\beta)(\alpha_0+k_1n_1)},
 \end{equation}

 Linear analysis of the kinetic around $E^*$ demonstrates that  $E^*$ is always stable  since
 \begin{equation}\label{reac-prim}
 J(E^*)=\begin{pmatrix}
 -\alpha_0-k_1n_1 && 0\\
 k_1n_1 && -\beta-k_2n_2
 \end{pmatrix},
 \end{equation}
 So the characteristic function is obtained for 
 $ \vert \lambda I- J(E^*)\vert= \lambda^2+\lambda tr(J)+det(J)=0,$ that has two negative eigenvalues $\lambda_1=-\alpha_0-k_1n_1,\; \; \lambda_2=-\beta-k_2n_2.$ \\
 \begin{Proposition} The system \eqref{primary model} does not go under \textit{Hopf bifurcation }since all parameters are non negative and $tr(J)<0$. 
 \end{Proposition}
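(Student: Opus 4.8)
The plan is to argue directly from the explicit form of $J(E^*)$ displayed in \eqref{reac-prim}, using the elementary fact that a Hopf bifurcation of a planar kinetic system requires the linearization at the equilibrium to possess a purely imaginary conjugate pair of eigenvalues $\lambda = \pm i\omega$ with $\omega \neq 0$. In the sign convention fixed by the characteristic equation $\lambda^2 + \lambda\,\tr(J) + \det(J) = 0$ used just above the statement, such a pair forces simultaneously $\tr(J) = 0$ and $\det(J) = \omega^2 > 0$. Hence it suffices to show that the Hopf condition $\tr(J) = 0$ can never be met under the standing non-negativity (in fact positivity) assumptions on the reaction rates, and the whole proposition reduces to a one-line sign computation.

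First I would read off the relevant coefficient directly from \eqref{reac-prim}. Since $J(E^*)$ is lower triangular, the coefficient of $\lambda$ in its characteristic polynomial is the negative of the sum of the diagonal entries, so that
\begin{equation*}
\tr(J) = \alpha_0 + k_1 n_1 + \beta + k_2 n_2 .
\end{equation*}
Each summand is a product of non-negative parameters, giving $\tr(J) \ge 0$ immediately; moreover the coexistence state \eqref{equlibrium} is well defined precisely when the denominators $\alpha_0 + k_1 n_1$ and $k_2 n_2 + \beta$ are nonzero, so at least one term in each group is strictly positive and therefore $\tr(J) > 0$ strictly for every admissible parameter choice. Invoking the standard planar criterion, the eigenvalues can become a purely imaginary pair only at $\tr(J)=0$; since $\tr(J)$ is bounded away from zero, this crossing is impossible and the necessary condition for a Hopf bifurcation fails identically. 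The cleanest alternative, which I would record as a remark, bypasses the trace entirely: the triangular form of $J(E^*)$ makes its eigenvalues $\lambda_1 = -(\alpha_0 + k_1 n_1)$ and $\lambda_2 = -(\beta + k_2 n_2)$ manifestly real for all parameters, so the complex-conjugate pair that a Hopf bifurcation demands simply cannot arise.

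There is no genuine analytic obstacle here; the only point requiring care is the sign convention. Because the characteristic polynomial is written as $\lambda^2 + \lambda\,\tr(J) + \det(J)$, the quantity denoted $\tr(J)$ is the \emph{negative} of the ordinary matrix trace, which is exactly why it comes out positive rather than negative. I would state this convention explicitly to forestall confusion, and I would note that it is strict positivity of at least one rate in each diagonal block—guaranteed by the well-posedness of \eqref{equlibrium}—that upgrades $\tr(J) \ge 0$ to the strict inequality $\tr(J) > 0$ needed to exclude the borderline case $\tr(J) = 0$ altogether.
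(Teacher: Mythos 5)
Your proposal is correct and follows essentially the same route as the paper: the paper, too, justifies the proposition by the computation immediately preceding it, namely that the triangular Jacobian \eqref{reac-prim} has the two real negative eigenvalues $\lambda_1=-\alpha_0-k_1n_1$ and $\lambda_2=-\beta-k_2n_2$, so the purely imaginary pair required for a Hopf bifurcation can never occur. Your explicit remark that the quantity the paper calls $tr(J)$ in $\lambda^2+\lambda\,tr(J)+\det(J)=0$ is the negative of the ordinary matrix trace (which is why it is positive) is a worthwhile clarification of the paper's sign convention, but it does not change the substance of the argument.
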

 \subsection{Turing Analysis}
 According to the Turing's statement, a steady state can be unstable in presence of diffusion. In order to investigate Turing instabilities, one needs to linearize whole system at $E^*$, which provides:
 \begin{equation}\label{diffusion-prim}
 \mathcal{D}(E^*)=
 \begin{pmatrix}
 D_0 && -dC_0^E\\
 0 && D_1
  \end{pmatrix},
  \end{equation}
  and for $W=[C_0-C_0^E, C_1-C_1^E] $, we have $$ \dot{W}=J(E^*)W+\mathcal{D}(E^*)\nabla^2 W,$$
In order to find the solution of the linearized model with replacing the system with $W=e^{ikx+\lambda t}$, in which $k$ and $\lambda$ imply wave number and growth rate correspondingly.
Therefore, dispersion relation of the model is determined as $\vert\lambda I-J+k^2 D\vert =0$.
Thus, the dispersion relation is given by: 
\begin{equation}
\lambda^2+A(k^2)\lambda+B(k^2)=0,
\end{equation}
\begin{equation}
A(k^2)=k^2(D_1+D_0)+\alpha_0+k_1n_1+\beta+k_2n_2,
\end{equation}
  According to the Turing analysis, $E^*$ is unstable as one eigenvalue of the dispersion relation is positive, that requires $B(k^2)<0$ for a range of wave numbers $[k_a,k_b]$, where 
  \begin{eqnarray}
  B(k^2)&=  k^4 D_0D_1+k^2[-D_0(-\beta-k_2n_2)-D_1(-\alpha_0-k_1n_1)-k_1n_1dC_0^E]\\
  &+(-k_1n_1-\alpha_0)(-\beta-k_2n_2)=b_1k^4+b_2k^2+b_3,
  \end{eqnarray}
Turing necessary condition obtains that in the Turing threshold 
\begin{align}\label{B_min}
B(k_{min}^2)=0,
\end{align}
 which gives
\begin{eqnarray}
k^2_{min}=\dfrac{-b_2}{2b_1},
\end{eqnarray}  
Since only positive wave numbers are physically meaningful and $b_1>0$, therefore, $b_2$ must be negative. This provides that critical Turing bifurcation parameter is $d$ and it is determined by:
\begin{equation}\label{Turing cindition 1}
 d>\dfrac{-D_0(-k_2n_2-\beta)-D_1(-k_1n_1-\alpha_0)}{k_1n_1C0_E^*},
\end{equation}
So to find the critical wave number we replace
$d_c=\eta_1/\eta_2+\varepsilon$ where 
\begin{eqnarray}
&\eta_1=-D_0(-k_2n_2-\beta)-D_1(-k_1n_1-\alpha_0),\\
&\eta_2=k_1n_1C_0^E,
\end{eqnarray}
  into \eqref{B_min} which obtains that 
  \begin{equation}
  \varepsilon=\dfrac{2\sqrt{D_0D_1 (-\alpha_0-k_1n_1)(-\beta-k_2n_2)}}{k_1n_1 C0_E^*},
  \end{equation}
and consequently 
\begin{equation}
k_{min}^2=\sqrt{\dfrac{(-\alpha_0-k_1n_1)(-\beta-k_2n_2)}{D_0D_1}}.
\end{equation}
Hence, the system goes to the pattern formation if necessary condition \eqref{Turing cindition 1} is satisfied.

\begin{remark} It is easy to check that in absence of cross-diffusion $(d=0)$, they system remain stable since $b_2>0$ for any given parameters. So, cross-diffusion term $d$ is only key mechanism of emergence of Turing patterns. 
\end{remark}
  \section{Modified Model and Quasi-Steady state Approximation }\label{sec:model2-3}
To obtain a more mechanistic representation of the enzymatic conversion process, the reaction scheme is extended to include the reversible formation of the enzyme–substrate complex and its catalytic conversion into the intermediate. This leads to the three–species reaction–diffusion system \eqref{Modified model}. We first analyse the spatially homogeneous dynamics obtained by neglecting diffusion terms.
\subsection{Equilibrium Analysis and Derivation of the Reduced Model}
\label{Sec:non-deg-model}

To determine the homogeneous equilibrium concentrations
$C_0^e$, $C_1^e$, and $C_{E_1S_0}^e$, we consider the steady-state
conditions associated with system \eqref{Modified model}, obtained by
setting all time derivatives to zero and neglecting spatial variations.

The steady-state equations are therefore
\begin{equation}
\begin{cases}
0=-\alpha_0(C_0-C_0^*)-k_{a_1} C_{E_1S_0},\\
0=k_{a_1}n_1 C_0-(k_{d_1}+k_{cat})C_{E_1S_0},\\
0=-(k_2n_2+\beta) C_1+k_{cat} C_{E_1S_0}.
\end{cases}
\end{equation}

From the second equation we obtain the relation
$
C_{E_1S_0}
=\dfrac{k_{a_1}n_1}{k_{d_1}+k_{cat}} C_0 .
$

Substituting this expression into the third equation gives
$
C_1=
\dfrac{k_{cat}}{k_2n_2+\beta}C_{E_1S_0}.
$

Finally, inserting the expression for $C_{E_1S_0}$ into the first equation yields the equilibrium value of the substrate concentration
$$
C_0^e=
\frac{\alpha_0 C_0^*}
{\alpha_0+\dfrac{k_{a_1}^2n_1}{k_{d_1}+k_{cat}}}.
$$

Consequently the homogeneous equilibrium of the full system is
$
E^*=(C_0^e,C_{E_1S_0}^e C_1^e),
$
where
$
C_{E_1S_0}^e=
\dfrac{k_{a_1}n_1}{k_{d_1}+k_{cat}} C_0^e,
\qquad
C_1^e=
\dfrac{k_{cat}}{k_2n_2+\beta}  C_{E_1S_0}^e.
$

\medskip
\noindent
\textbf{Quasi-steady-state reduction.}
In many enzymatic pathways the formation and dissociation of the
enzyme--substrate complex occur on a much faster time scale than the
evolution of the substrate concentrations. Consequently, the complex
rapidly relaxes toward a local quasi-steady state determined by the
balance of the binding and catalytic reactions. Under this separation
of time scales, the diffusive contribution to the equation governing
$C_{E_1S_0}$ can be neglected to leading order.

Applying this quasi-steady-state approximation to the second equation of
system \eqref{Modified model} gives the algebraic relation
$
(k_{d_1}+k_{cat}),C_{E_1S_0}\approx k_{a_1}n_1 C_0,
$
from which
\begin{equation}
\label{CES}
C_{E_1S_0}\approx
\frac{k_{a_1}n_1}{k_{d_1}+k_{cat}}C_0.
\end{equation}

Substituting the approximation \eqref{CES} into the remaining equations of
system \eqref{Modified model} yields the reduced two-variable
reaction-diffusion system
\begin{equation}
\label{non-degenerated}
\begin{cases}
\dfrac{\partial u_0}{\partial t}
= -\alpha_0 \big(u_0 - u_{in}^*\big)

-\dfrac{k_{a_1} ^2 n_1}{k_{d_1} + k_{cat}} u_0

+ \gamma_0 \nabla^2 u_0

- \gamma u_0 \nabla^2 u_1, \\
  \dfrac{\partial u_1}{\partial t}
  = - (k_2 n_2 + \beta) u_1

+ \dfrac{k_{a_1} k_{cat} n_1}{k_{d_1} + k_{cat}} u_0
+ \gamma_1 \nabla^2 u_1 .
  \end{cases}
  \end{equation}

The reduced system \eqref{non-degenerated} retains the same structural
form as the primary reaction--diffusion model \eqref{primary model},
but with effective reaction coefficients that incorporate the influence
of the enzyme--substrate complex through the quasi-steady-state
approximation \eqref{CES}. In this formulation the complex no longer
appears as an independent dynamical variable, while the biochemical
effects of enzyme binding and catalytic conversion remain embedded in
the renormalized reaction terms.

This reduction provides a simplified description of the pathway dynamics
that facilitates analytical comparison with the primary model, while the
full three-species system is retained in the subsequent analysis of
diffusion-driven pattern formation.

\subsection{{Linear Analysis of the  Reduced Model}}
The reduced  system \eqref{non-degenerated}, obtained via the
quasi-steady state approximation of the enzyme-substrate complex, admits a
unique positive coexistence steady state given by
\begin{equation}\label{modified-Equ}
u_0^*=\dfrac{\alpha_0u_{in}^*(k_{d_1}+k_{cat})}
{\alpha_0(k_{d_1}+k_{cat})+k_{a_1}^2 n_1},\qquad
u_1^*=\dfrac{k_{a_1}k_{cat}n_1 u_{0}^*} 
{(k_{d_1}+k_{cat})(k_{2}n_2+\beta)}.
\end{equation}
All parameters are assumed to be positive.

This model contains all positive parameters. Linearizing the model at $E^*=(u_0^*, u_1^*)$ gives 
\begin{equation}
 J(E^*)=\begin{pmatrix}
 -\alpha_0-\dfrac{k_{a_1}^2n_1}{k_{d_1}+k_{cat}} && 0\\
 \dfrac{k_{a_1}k_{cat}n_1}{k_{d_1}+k_{cat}} && -\beta-k_2n_2
 \end{pmatrix},\;\; \; \mathcal{D}(E^*)=\begin{pmatrix}
 \gamma_0  && -\gamma u_0^*\\
 0 && \gamma_1
 \end{pmatrix}.
\end{equation}
This system is locally stable at $E^*$ simultaneously. And it admits Turing instability in presence of diffusion with necessary condition
\begin{equation}\label{turing condition-M_model}
 \gamma >\dfrac{-\gamma_0(-k_2n_2-\beta)-\gamma_1(-\alpha_0-\dfrac{k_{a_1}^2n_1}{k_{d_1}+k_{cat}})}{\dfrac{k_{a_1}k_{cat}n_1}{k_{d_1}+k_{cat}}u_0^*},
 \end{equation}
 So $\gamma$ is  Turing bifurcation parameter.
 Moreover, critical wave number is respectively determined
 $$ k_{min}^2=\sqrt{\dfrac{(\alpha_0-\dfrac{k_{a_1}k_{d_1}n_1^2}{k_{d_1}+k_{cat}})(-k_2n_2-\beta)}{\gamma_0 \gamma_1}},$$
\begin{figure}[h]
\centering
\vspace{-5cm}
\includegraphics[width=0.9\textwidth]{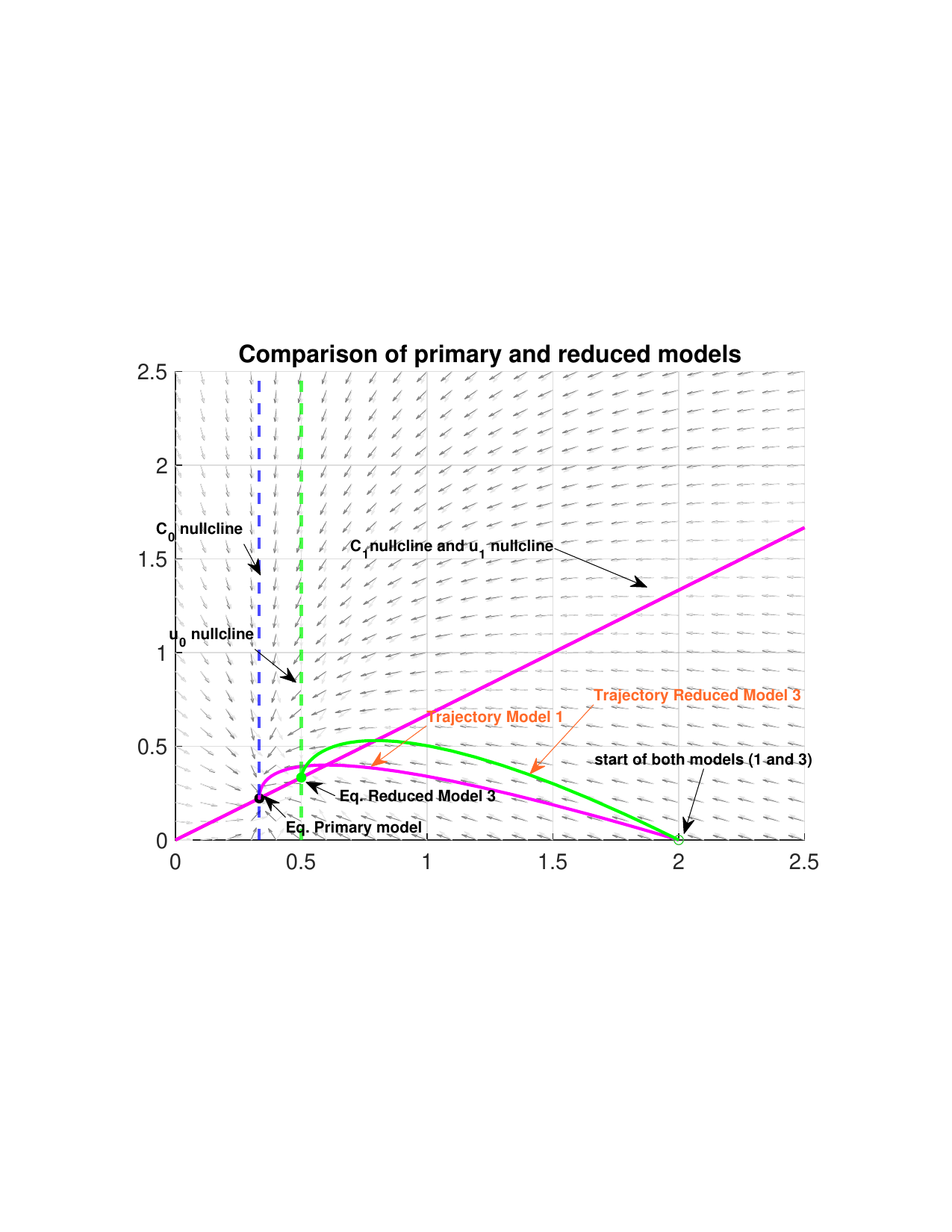}
\vspace{-5cm}
\caption{\textit{\textbf{Phase-plane comparison of the simplified  enzyme-pathway models.}
Arrows represent vector fields (dark gray: primary model; light gray: simplified model). 
Red/blue and magenta/green curves denote the corresponding nullclines, whose intersections yield homogeneous steady states (dots). 
Trajectories from identical initial conditions are shown in magenta and green. 
The simplified formulation shifts the steady state toward higher intermediate concentration and exhibits slower relaxation, 
consistent with nonlinear feedback between catalysis and compartmentalization.}}
\label{fig:nullclines}
\end{figure}
We have investigated numerically necessary conditions of appearance of Turing instabilities of both models \eqref{primary model} and \eqref{non-degenerated} of a given set of parameters (see (Figure~\ref{fig:B and Lambda plots})). In this figure, plot of $B(k^2)$ demonstrate as bifurcation parameters $d$ of \eqref{primary model} and $\gamma$ of \eqref{non-degenerated} crosses their critical values, necessary conditions \eqref{Turing cindition 1} and \eqref{turing condition-M_model} are satisfied and correspondingly for both models $B(k^2)$ admit negative value and respectively $\lambda(k^2)$ admit positive values, which states appearance of pattern formation.

 \begin{figure}[h]
		\centering
		{\includegraphics[width=0.49\textwidth]{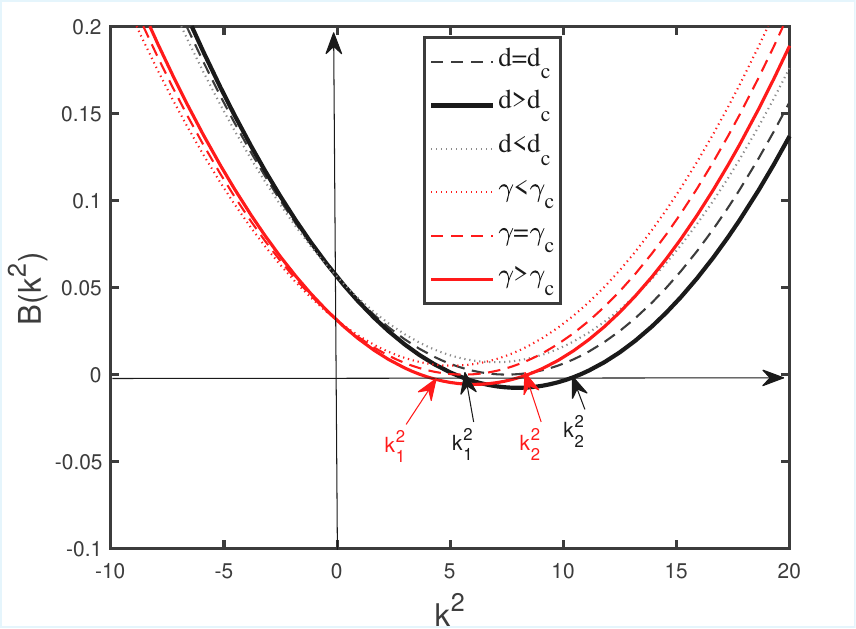}}
		{\includegraphics[width=.455\textwidth]{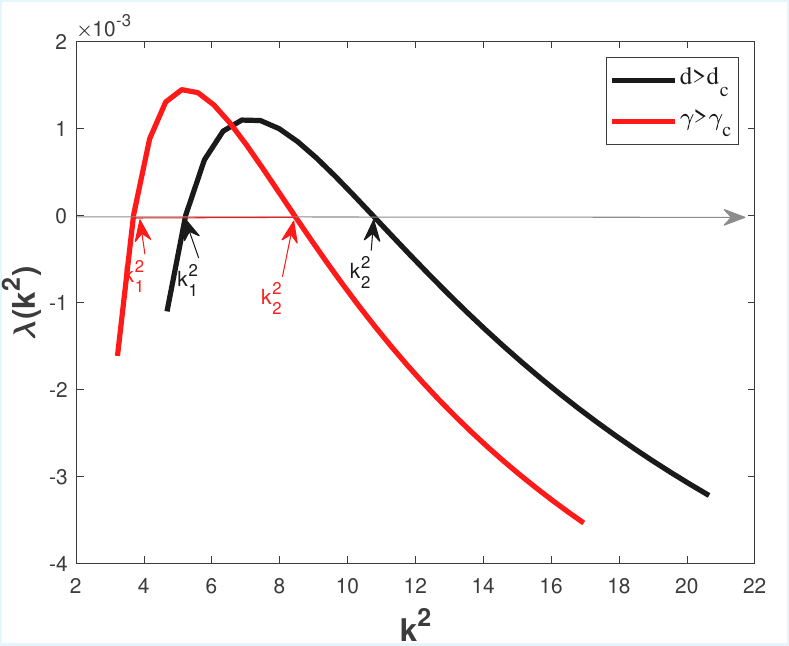}}		
		\caption{\textit{Plots of polynomials $B(k^2)$ of two models \eqref{non-degenerated} (red plots) and \eqref{primary model} (black plots) (right figure). Plots of dispersion relation polynomials $\lambda^2(k^2)$ (left figure). Given parameters are $\alpha_0=0.01 $, $k_1=0.05 $, $k_2=0.02 $, $k_{d_1}=0.01 $, $k_{a_1}=0.03 $, $k_{cat}=0.1 $, $\beta=0.01$, $C_0^*=0.1 $, $n_1=10 $, $n_2=5 $, $\gamma_0=0.1 $, $\gamma_1=0.01 $.  where in this figures $d_c$ and $\gamma_c$ are Turing bifurcation parameter of model \eqref{primary model} and \eqref{non-degenerated} respectively.}}
		\label{fig:B and Lambda plots}
		\hspace{1mm}
	\end{figure}
A qualitative comparison between the kinetic dynamics of the simplified model (1.1) 
and the reduced  model (3.3) is shown in Figure~\ref{fig:nullclines}. 
The phase–plane analysis highlights several key differences: (i) the nullclines of the 
reduced model intersect at a steady state with higher intermediate concentration, 
(ii) trajectories relax more slowly toward equilibrium, reflecting the additional nonlinear 
feedback introduced by enzyme–substrate binding, and (iii) the vector fields of the two 
systems display distinct curvature, indicating that the mechanistic correction introduced 
by the approximation reduction significantly alters the underlying kinetics. 	
\section{Turing region}\label{sec:turinregion}
To understand how parameters influence the emergence of Turing instability
in the full enzymatic model, and to compare these effects with those
predicted by the reduced  system, we investigate the
corresponding Turing regions in parameter space.
 In order to achieve this, we need to consider positivity conditions of both $E^*$, stability conditions in absence  of diffusion. According to the \eqref{equlibrium} and \eqref{modified-Equ} for the parameter ranges considered in this study, the coexistence steady
states of both models are positive and locally asymptotically stable in
the absence of diffusion. Thus, there are only Turing conditions \eqref{Turing cindition 1}, and \eqref{turing condition-M_model}.
According to these conditions, we investigate variation of parameters in plane $(n_1,d),\; (n_1, \gamma)$.
Figure~\ref{fig:B and Lambda plots}, demonstrates that for the given data set, the bifurcation parameters of models \eqref{primary model} and \eqref{non-degenerated} are $d_c=2.7190$ and $\gamma_c=5.4035$ correspondingly.
\subsubsection{Influence of Intermediate Diffusion on the Stability of Enzyme–Substrate Spatial Organization in model \eqref{primary model}}
To investigate the influence of diffusion on pattern formation and aggregation, we examined how variations in the diffusion coefficients of both the substrate ($C_0$) and the intermediate ($C_1$) affect the emergence of spatial structures. Our analysis reveals that increasing either $D_0$ or $D_1$ leads to a progressive reduction of the Turing instability region. As diffusion becomes more efficient, the concentration gradients of both chemical species are rapidly smoothed out, preventing the formation of localized substrate-enriched zones. Consequently, the spatial structures (lobes or peaks) that would otherwise arise from enzyme–substrate feedback are flattened, and the system tends toward a spatially homogeneous steady state. This stabilizing effect of molecular mobility is clearly visible in Figure~\ref{fig:n1d-Di}, where the Turing domain diminishes with increasing $D_0$ and $D_1$.
\begin{figure}[h]
		\centering
		{\label{fig:(d,n1)-D0}\includegraphics[width=0.48\textwidth]{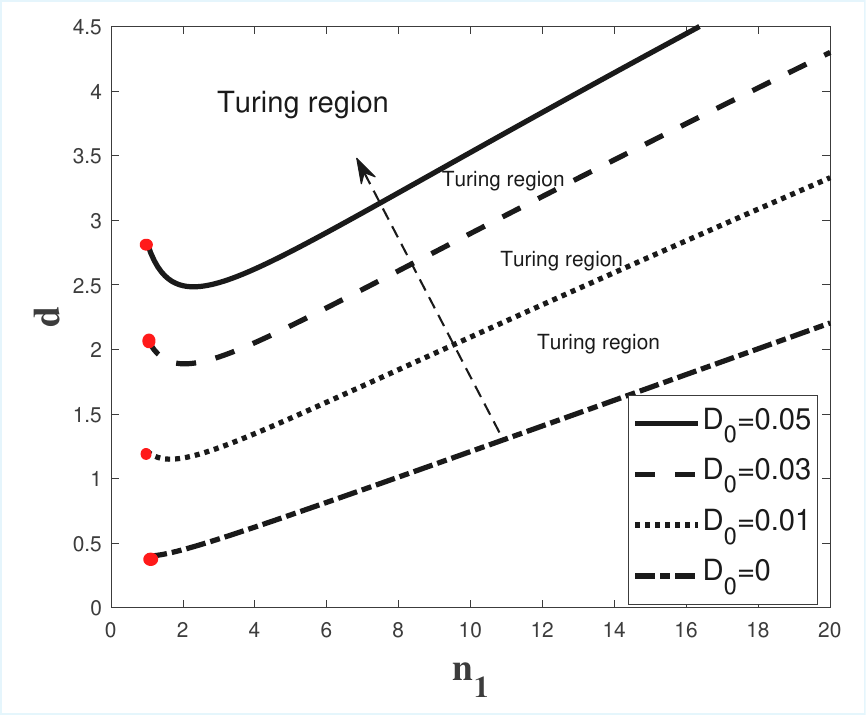}}
		\hspace{1mm}
		{\label{fig:fig:(d,n1)-D1}\includegraphics[width=.48\textwidth]{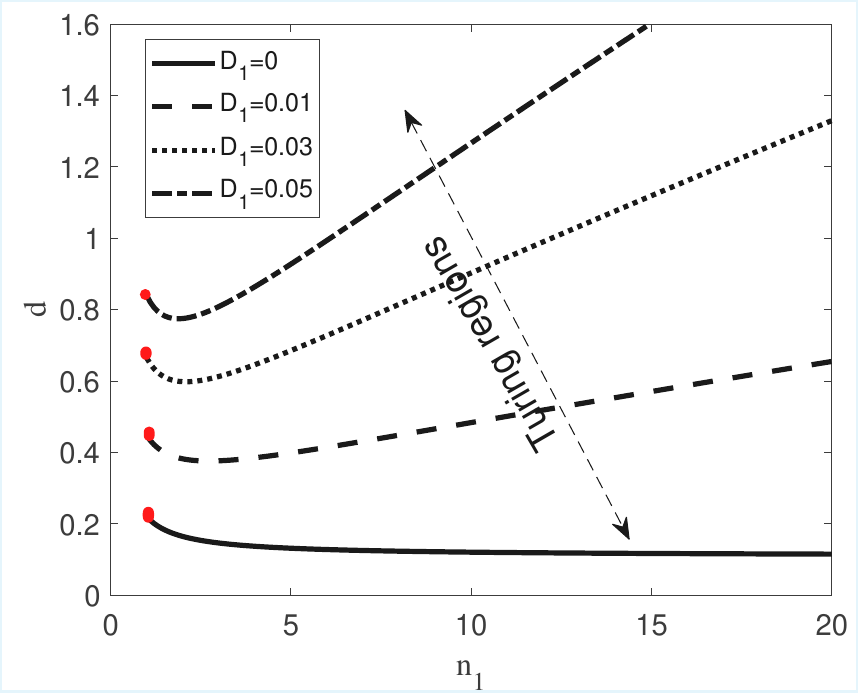}}
	\hspace{1mm}
		\caption{\textit{Variation of Turing regions in plane $(n_1, d)$  as the parameters $D_0$ (left plot) and $D_1$ (right plot) increase. Red dots in the plots indicate $d_c$ threshold for each given $D_0$ or $D_1$. Parameters have chosen like Figure~\ref{fig:B and Lambda plots}}.}
		\label{fig:n1d-Di}
		\hspace{1mm}
	\end{figure}
\subsubsection{Influence of Intermediate Diffusion on the Stability of Enzyme–Substrate Spatial Organization in model \eqref{non-degenerated}}
The analysis of Model~\eqref{non-degenerated} reveals that the diffusion of the intermediate plays a crucial role in regulating the onset of spatial organization. In Figure~\ref{fig:n1gamma-gammai}, as the diffusivity of the intermediate increases, the Turing region progressively shrinks and the instability threshold $\gamma_c(n_1)$ shifts toward higher values of the cross-diffusion parameter. This behavior indicates that enhanced intermediate mobility tends to equalize its spatial distribution, thereby weakening the gradients necessary to sustain localized substrate–intermediate interactions. From a biochemical perspective, this corresponds to a reduction in spatial compartmentalization: the mobile intermediate rapidly redistributes throughout the domain, preventing the formation of enzyme–substrate-enriched zones and driving the system toward a spatially homogeneous steady state.
\begin{figure}[H]
		\centering
		{\label{fig:(d,n1)-gamma0}\includegraphics[width=0.48\textwidth]{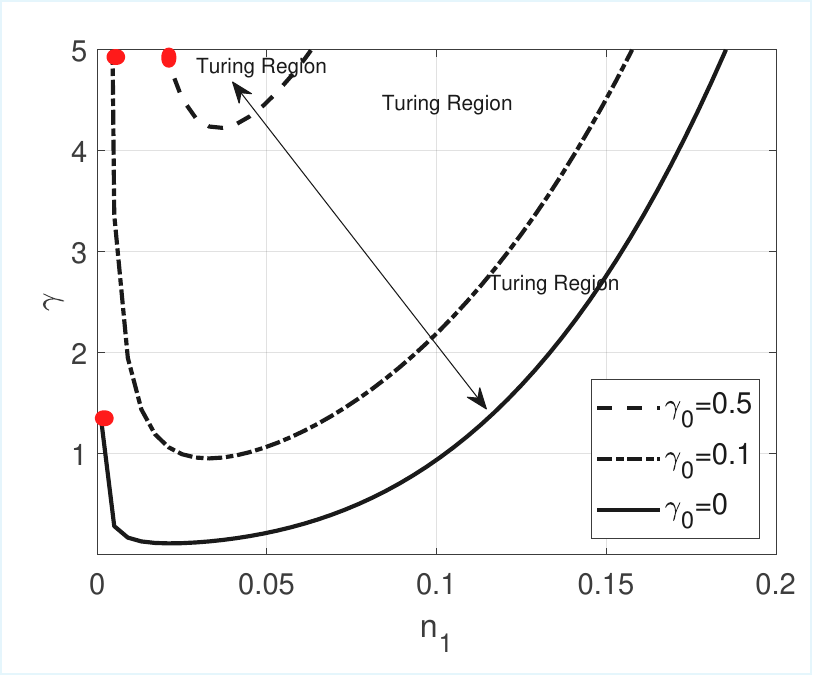}}
		\hspace{1mm}
		{\label{fig:fig:(d,n1)-gamma1}\includegraphics[width=.47\textwidth]{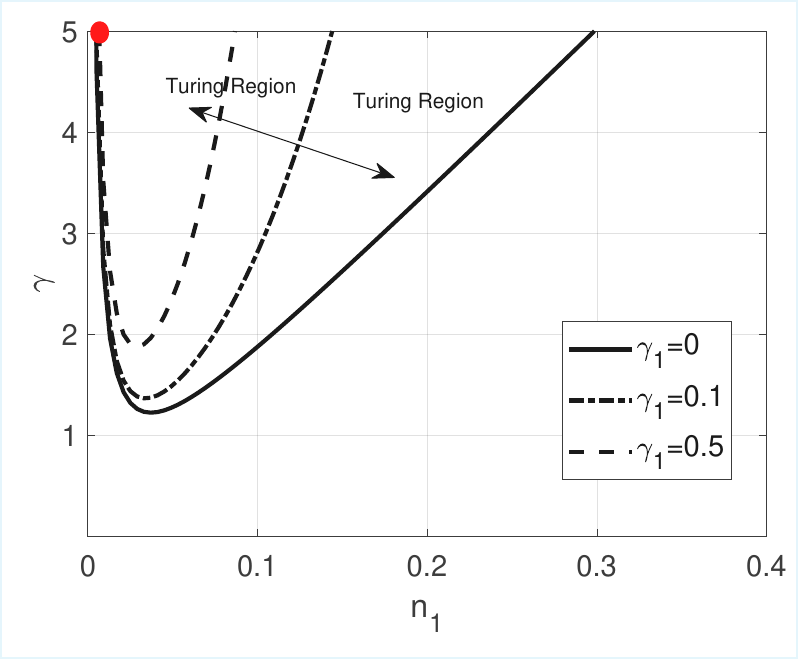}}
	\hspace{1mm}
		\caption{\textit{Variation of Turing regions in plane $(n_1, \gamma)$  as the parameters $\gamma_0$ (left plot) and $\gamma_1$ (right plot) increase. Red dots in the plots indicate $\gamma_c$ threshold for each given $\gamma_0$ or $\gamma_1$. Parameters have chosen like Figure~\ref{fig:B and Lambda plots}. } }
		\label{fig:n1gamma-gammai}
		\hspace{1mm}
	\end{figure}
	In both the primary and reduced models, increasing the substrate diffusivity $D_i, \gamma_i$ reduces the Turing region, indicating that higher substrate mobility stabilizes the homogeneous state. However, in the reduced model, where the cross-diffusion coupling depends explicitly on substrate concentration and reaction kinetics, this stabilizing influence is stronger. Thus, while both models exhibit diffusion-mediated suppression of pattern formation, the mechanistic coupling in the reduced model amplifies the homogenizing effect.
%
\vspace{-4cm}
\begin{figure}[H]
		\centering
\hspace{-1.5cm}
		\vspace{-3cm}
		{\label{fig:(d,n1)-kcat}\includegraphics[width=0.45\textwidth]{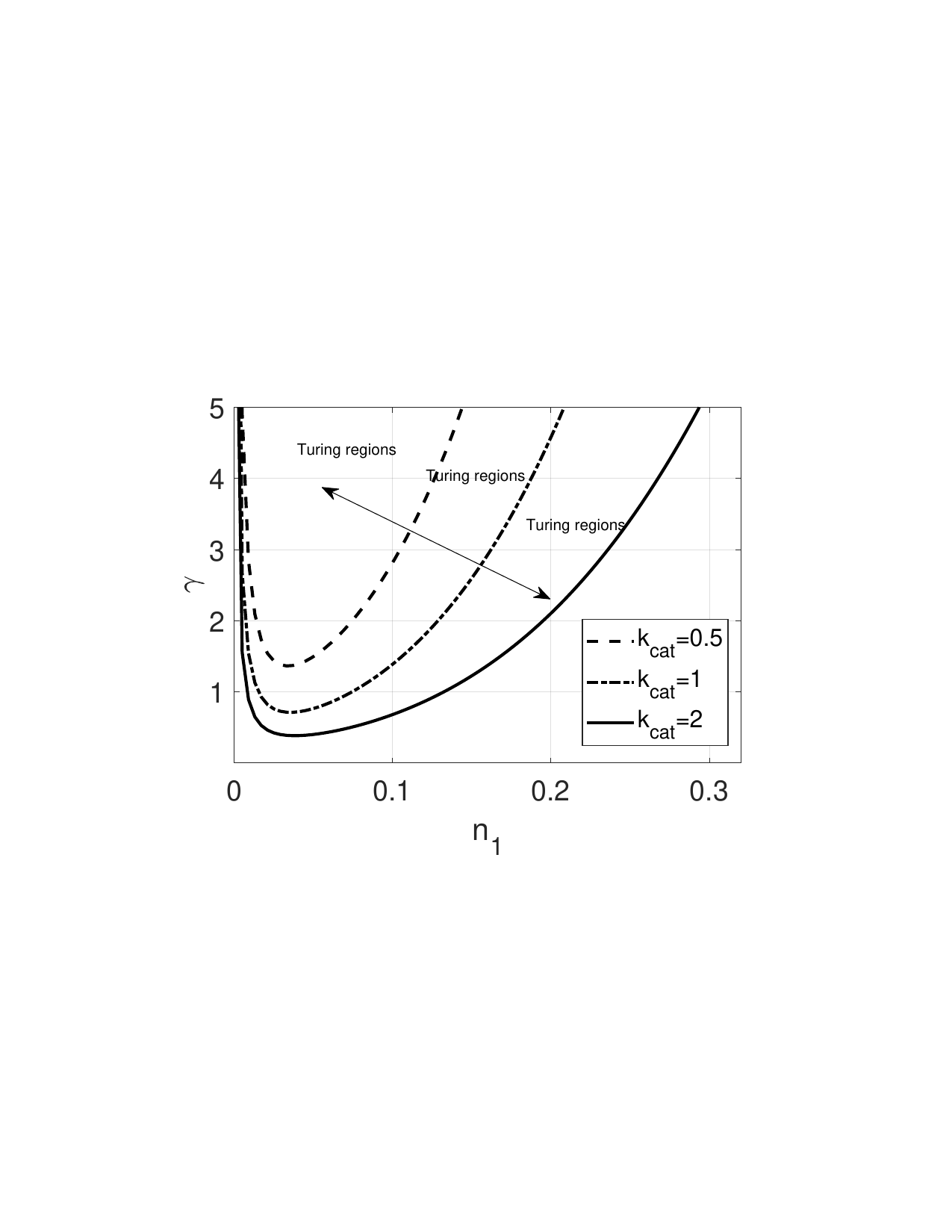}}
		\hspace{-2.5cm}
		{\label{fig:fig:(d,n1)-kd1}\includegraphics[width=.45\textwidth]{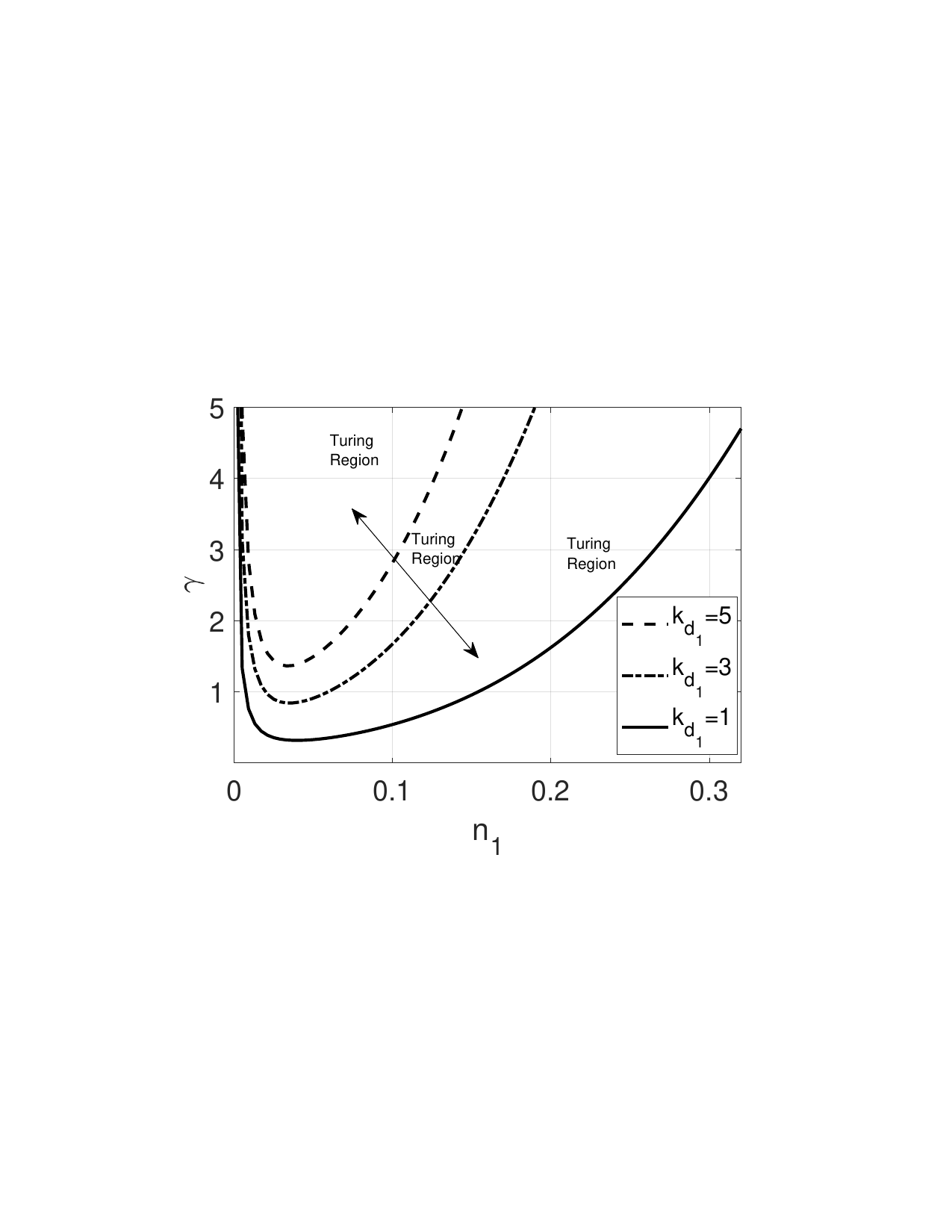}}
		\hspace{-2.5cm}
		{\label{fig:fig:(d,n1)-ka1}\includegraphics[width=.45\textwidth]{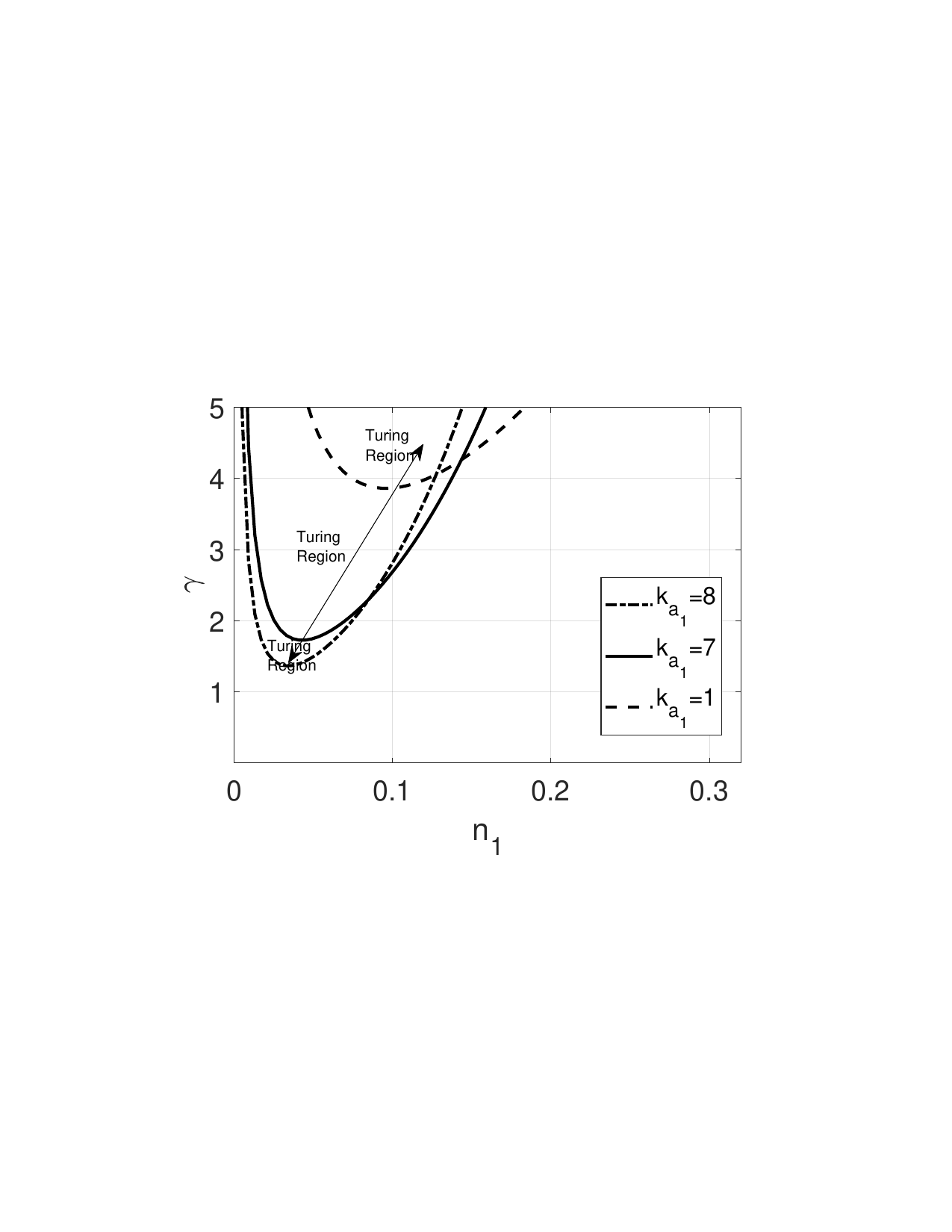}}
	\hspace{-1cm}
		\caption{\textit{Variation of Turing regions in plane  $(n_1,\gamma)$ as the parameter $k_{cat},\; k_{d_1}, \; k_{a_1}$ increases res. Parameters have chosen like Figure~\ref{fig:B and Lambda plots}.} }
		\label{fig:n1dgamma-kcat-kd1ka1}
		\hspace{1mm}
		\vspace{-2cm}
	\end{figure}	
\subsubsection{Influence of Kinetic Parameters on the Stability of Enzyme–Substrate Spatial Organization in Model~\eqref{non-degenerated}}  
Figures~\ref{fig:n1dgamma-kcat-kd1ka1} illustrate the influence of the kinetic parameters $k_{a_1}$, $k_{d_1}$, and $k_{cat}$ on the onset of diffusion-driven instability. In all cases, the system displays the characteristic U-shaped Turing boundary, but the extent and position of the unstable region are strongly affected by the reaction rates. Increasing the association rate $k_{a_1}$ enlarges the Turing region and shifts it toward higher $n_1$, indicating that stronger substrate--enzyme binding enhances the nonlinear coupling responsible for spatial self-organization. Conversely, increasing the dissociation rate $k_{d_1}$ diminishes the Turing domain, showing that faster complex dissociation stabilizes the homogeneous steady state. A similar expansion of the Turing region is observed for higher catalytic rates $k_{cat}$, where a more efficient turnover of the intermediate amplifies local concentration gradients and reinforces the enzyme--substrate feedback. Overall, strong association and high catalytic activity favor the emergence of spatial structures, whereas rapid dissociation weakens the coupling and suppresses pattern formation.
 \begin{figure}[h]
		\centering
		{\label{fig:(d,n1)-alpha0}\includegraphics[width=0.47\textwidth]{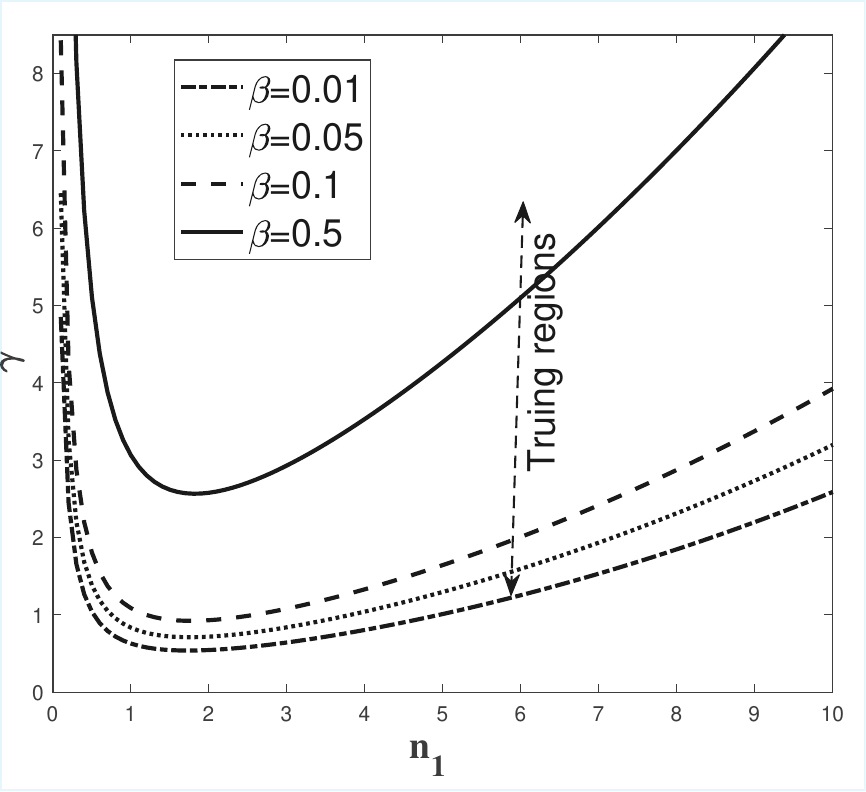}}
		\hspace{1mm}
		{\label{fig:fig:(d,n1)-beta}\includegraphics[width=.49\textwidth]{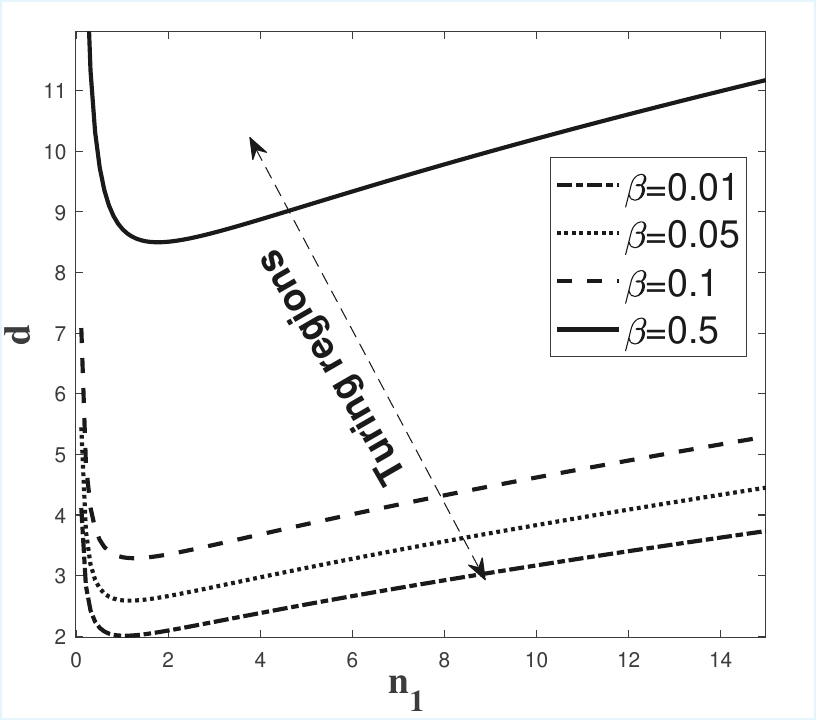}}
	\hspace{1mm}
		\caption{\textit{Variation of Turing regions in plane $(n_1, d)$ (right plot) and $(n_1,\gamma)$ (left plot) as the parameter $\beta$ increases. Parameters have chosen like Figure~\ref{fig:B and Lambda plots}.} }
		\label{fig:n1dgamma-beta}
		\hspace{1mm}
	\end{figure}
\begin{figure}[H]
		\centering
		{\label{fig:(d,n1)-alpha0}\includegraphics[width=0.48\textwidth]{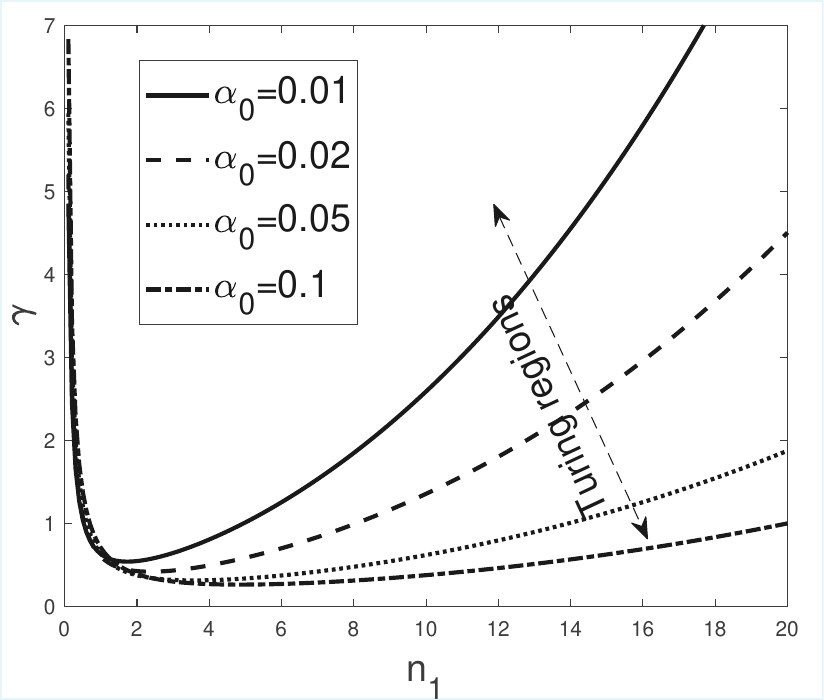}}
		\hspace{1mm}
		{\label{fig:fig:(d,n1)-beta}\includegraphics[width=.48\textwidth]{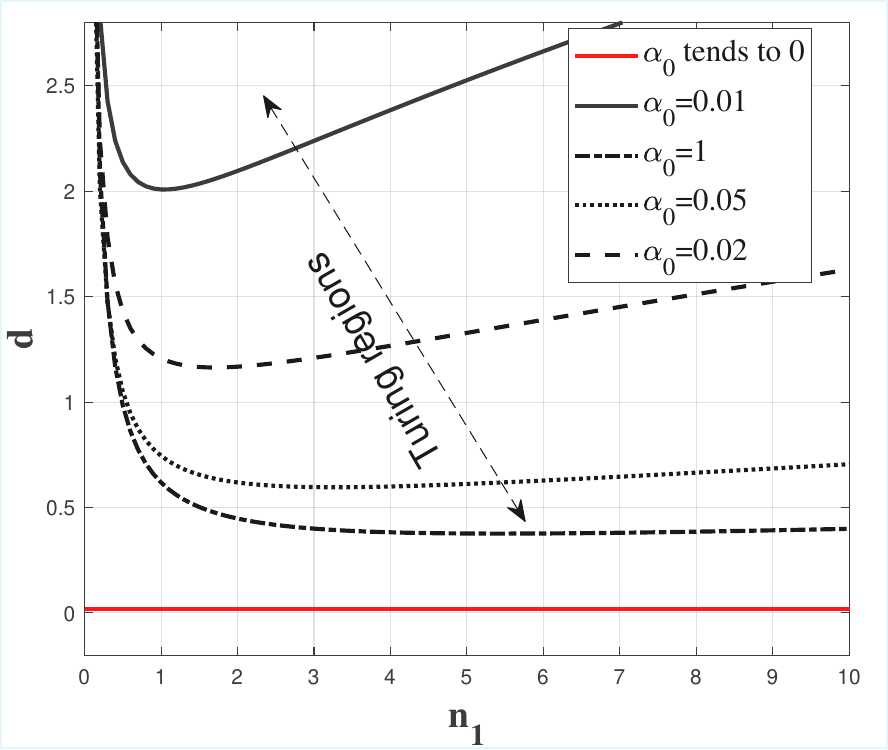}}
	\hspace{1mm}
		\caption{\textit{Variation of Turing regions in plane $(n_1, d)$ (right plot)
		 and $(n_1,\gamma)$ (left) as the parameter $\alpha$ increases. Parameters have chosen like Figure~\ref{fig:B and Lambda plots}. }}
		\label{fig:n1dgamma-alpha}
		\hspace{1mm}
	\end{figure}
	The  figures \ref{fig:n1dgamma-beta} and \ref{fig:n1dgamma-alpha} investigate how Turing instability regions shift under variations of key parameters, namely $\beta$ and $\alpha_0$, in the $(n_1, d)$ and $(n_1, \gamma)$ planes. In both the $(n_1, d)$ and $(n_1, \gamma)$ plots, increasing $\beta$ leads to higher critical values for $d$ and $\gamma$, respectively. This indicates that larger degradation or reaction rates of the inhibitor suppress pattern formation, requiring stronger diffusion or cross-diffusion to induce Turing instability. Conversely, the $(n_1, d)$ and $(n_1, \gamma)$ plots with varying $\alpha_0$ show that decreasing $\alpha_0$ significantly broadens the Turing region. In $(n_1, d)$  plane, when $\alpha_0 \rightarrow 0$, the threshold for $d$ approaches zero, suggesting that slower decay of the activator greatly facilitates the onset of instability.

In summary, the figures illustrate that the Turing space is highly sensitive to kinetic parameters. Smaller values of $\beta$ and $\alpha_0$ enlarge the admissible region in the parameter space where spatial patterns can form. This implies that systems with slower degradation or decay kinetics are more susceptible to Turing-type pattern formation. These insights are particularly relevant for enzymatic and reaction-diffusion models in biological systems, where parameter tuning can drive or suppress morphogenetic processes.	
	\begin{figure}[H]
		\centering
		{\label{fig:(d,n1)-alpha0}\includegraphics[width=0.47\textwidth]{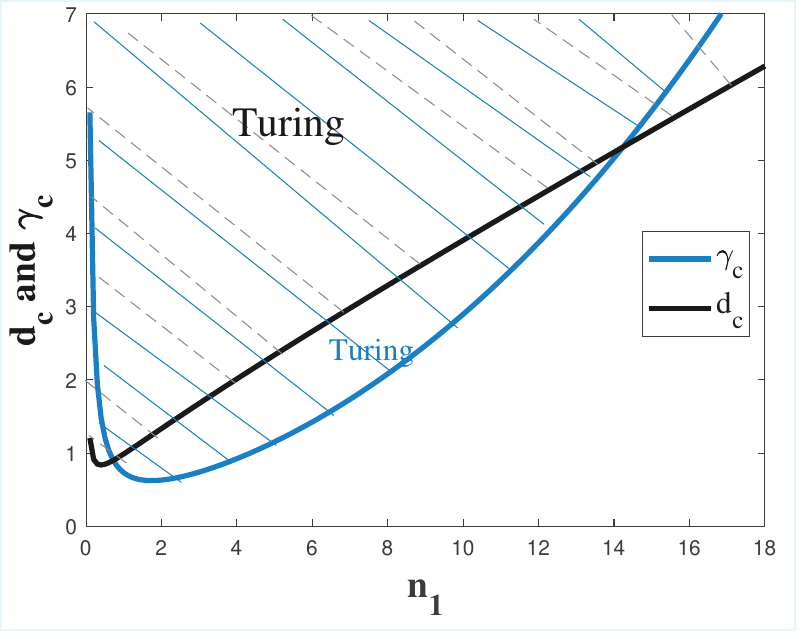}}
		\hspace{1mm}
		{\label{fig:fig:(d,n1)-beta}\includegraphics[width=.49\textwidth]{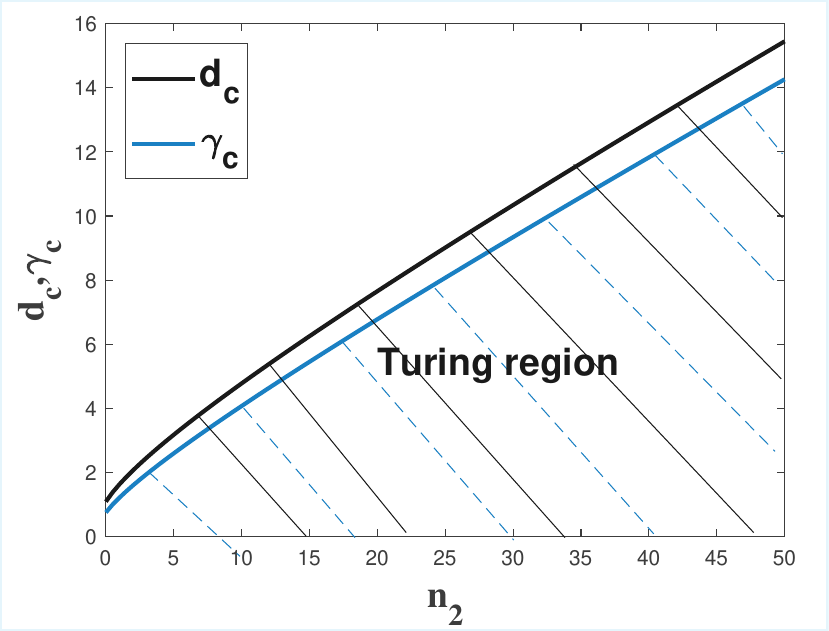}}
	\hspace{1mm}
		\caption{\textit{Variation of Turing regions in plane $(n_1, d)$ (right plot) and $(n_1,\gamma)$ (left plot). Variation of Turing regions in plane $(n_2, d)$ and $(n_2,\gamma)$ (right plot). Parameters have chosen like Figure~\ref{fig:B and Lambda plots}.}}
		\label{fig:n1n2.d.gamma}
		\hspace{1mm}
	\end{figure}	
	The plotted critical curves Figure~\ref{fig:n1n2.d.gamma}  highlight how the onset of pattern formation depends
nonlinearly on enzyme levels, with broader instability regions for low-to-moderate $n_1$, and increasingly suppressed Turing regions with high $n_2$. We analyze the critical diffusion parameters 
$d_c$ and $\gamma_c$
associated with Turing instability in two enzyme-based reaction-diffusion models. In the $n_1$-parameter space, both curves exhibit a non-monotonic, U-shaped behavior. This indicates that intermediate values of $n_1$ (the enzyme responsible for the first reaction step) are most conducive to pattern formation, requiring minimal diffusion or cross-diffusion to trigger instability. At both low and high 
$n_1$, the system becomes more stable, and higher values of $d_c$ and $\gamma_c$ are needed to induce patterns.
 
In contrast, the plots against $n_2$ show that both 
$d_c$ and $\gamma_c$ increase monotonically with enzyme concentration. This implies that the second enzyme, involved in downstream processes, stabilizes the system by diminishing feedback necessary for pattern formation. The widening gap between the two curves also suggests that the effect of cross-diffusion becomes increasingly significant as 
$n_2$ increases, reinforcing the role of downstream reactions in regulating spatial instabilities.
	\section{Weakly nonlinear Analysis of models}\label{sec:wnl}
	In this section we perform a weakly nonlinear analysis for the model \eqref{primary model} to predict the amplitude
of the pattern near the Turing threshold. The weakly nonlinear analysis
is based on the multiple scale methods (\cite{gambino2012turing}). The similar strategy has applied on the model \eqref{non-degenerated}.  Since near to the bifurcation the amplitude
of the pattern (diffusion-driven instability) has slow temporal scale, then a new temporal scale is defined.

The solution of the original system is written as a weakly nonlinear expansion
in the small control parameter $\varepsilon$.
We choose $\varepsilon^2 =\dfrac{d-d_c}{d_c}.$

The slow scale is obtained from the linear analysis: it is easy to prove that
$\lambda\sim \varepsilon^2$ and, since the growth rate of the perturbation is proportional to the
$exp(\lambda t)$, the slow time scale $T$ is order $\varepsilon^2$. Therefore, close to the threshold
we separate the fast time $t$ and slow time $T = \varepsilon^2t$, so that time derivative is
obtained as $\partial_t \rightarrow \partial_t + \varepsilon^2\partial_T$.

We separate the linear part from the nonlinear part:
\begin{equation}\label{linanonlin}
\partial_{t}\bm{w}=\mathcal{L}^{d_c}\bm{w}+ \mathcal{Q}_{D}^{d_c}(\bm{w},\bm{\nabla^2 w}), \qquad \bm{w}=[C_0-C_0^E, C_1-C_1^E],
\end{equation}
and linear operator is defined as $\mathcal{L}^{d_c} = \Gamma J+ \mathcal{D}^{d_c}\nabla ^2, $
where $\mathcal{D}$ and ${J}$ have defined in (\ref{reac-prim} and \ref{diffusion-prim}) and  nonlinear operators $ \mathcal{Q}_D^{d_c}(\bm{x},\bm{y})$ are introduced as:
\textit{$ \bm{x}=(x_{C_0},x_{C_1}),\; \bm{y}=(y_{C_0},y_{C_1})$},
 \begin{equation}
\mathcal{Q}_{D}(\bm{x},\bm{y})=\begin{bmatrix}
-dx_{c_0}y_{c_1}\\
0
\end{bmatrix},
\end{equation}
And moreover the bifurcation parameter and the solution are expanded asymptotically 
\begin{equation}\label{d-exp}
d=d_c+\varepsilon^2 d^{(2)}+O(\varepsilon^4),
\end{equation}
\begin{equation}\label{w-exp}
\bm{w}=\varepsilon \bm{w}_1+\varepsilon^2 \bm{w}_{2}+\varepsilon^3 \bm{w}_3+O(\varepsilon^4),
\end{equation}
Therefore, the diffusion matrix is given in terms of perturbation parameters as:
\begin{equation}\label{D-exp}
\mathcal{D}=\begin{bmatrix}
D_0  & -d_c C_0^E\\
 0 & D_1
\end{bmatrix}
+\varepsilon^2 \begin{bmatrix}
0 & -d^{(2)}C_0^E\\ 0 & 0\end{bmatrix}+O(\varepsilon^4)
\end{equation}
  and consequently 
  \begin{align}\label{QD-exp}
  \mathcal{Q}_D(\bm{w}, \nabla^2 \bm{w}) &= \varepsilon^2 d_c \mathcal{Q}_D(\bm{w}_1,\nabla^2\bm{w}_1)+\varepsilon^3 d_c\lbrace\mathcal{Q}_D(\bm{w}_1,\nabla^2\bm{w}_2)\\
  &+\mathcal{Q}_D(\bm{w}_2, \nabla^2\bm{w}_1)\rbrace+\varepsilon^4 d^{(2)}\mathcal{Q}_D(\bm{w}_2, \nabla^2 \bm{w}_2)+O(\varepsilon^5),
   \end{align}
  Now we replace all expansion in \eqref{d-exp}, \eqref{D-exp},\eqref{w-exp}, and \eqref{QD-exp} and sort according to the order of $\varepsilon $.
 At $O(\varepsilon)$,we obtain the following linear problem:
\begin{equation}
\mathcal{L}^{d_c}\bm{w}_1=0,\quad  \bm{w}_1=A(T){\rho} \cos(k_c x),
\end{equation}
such that satisfying the homogeneous boundary conditions
 where ${\rho}$ belongs to the $ker( J-k^2_c \mathcal{D}^{d_c})$.
 In this stage the $A(T)$, the amplitude of the pattern is arbitrary and the vector $\rho$ is considered constant such whose normalization is 
\begin{equation}
{\rho}=\begin{bmatrix}1\\ \rho_2\end{bmatrix},
\end{equation}
where $\rho_2=-\dfrac{-\alpha_0-k_1n_1-k_c^2D_0}{dC_0^Ek_c^2}$.

 Moreover, at $O(\varepsilon^2)$ there is this linear equation which must be solved:
 $$\mathcal{L}^{d_c} \textbf{w}_2=\bm{F},$$
According to the Fredholm alternative theorem, this equation has a  solution if and only if
 $\langle \bm{F} , \psi \rangle=0,$
  where $  \psi^{*}$
   is defined at $$ \psi= \begin{bmatrix}1\\R_1^*\end{bmatrix}\cos(k_c x)= \psi^* \cos(k_c x),$$
   and $ \langle .,.\rangle$ implied the scalar product in $\bm{L}^2(0,\dfrac{2\pi}{k^c})$ and $  \psi^* \in Ker( J-k^2_cD^{d_c})^{\dagger}$ where $\dagger$  shows transpose of complex conjugate of the matrix. 
   
In particular, $ \bm{F}=-\dfrac{1}{4}A^2\sum_{i=0,2}\mathcal{M}_i( \rho,( \rho)\cos(ik_c x)$,
in which $\mathcal{M}_i( \rho, \rho)=-i^2k_c^2\mathcal{Q}_D^{d_c}( \rho,\rho)$. Hence, the vector  $ \bm{w}_2$ is defined as \\
$\bm{w}_2=A^2(\bm{w}_{20}+\bm{w}_{22}\cos(2k_c x))$
 so that $\mathcal{L}_i^{d_c}\bm{w}_{2i}=-\dfrac{1}{4}\mathcal{M}_i( \rho, \rho),\; i={0,2}$ and $\mathcal{L}_{i}= -i^2k^2_c D^{d_c}$.\\
 In following, at $O( \varepsilon^3)$ we obtain the linear problem
 \begin{equation}{\label{order3}}
 \mathcal{L}^{d_c} \bm{w}_3=\bm{G},
 \end{equation}
 where 
 $
\bm{G}=\left(\dfrac{dA}{dT}\rho+A\bm{G}_1^{(1)}+A^3\bm{G}_1^{(3)}\right)\cos(k_c x)+A^3\bm{G}_3\cos(3k_c x),
$
in which \\
$ \bm{G}_1^{(1)}=k^2_c \begin{bmatrix} 0&d^{(2)}C_0^E\\0&0\end{bmatrix} \rho,$\\
$\bm{G}_1^{(3)}=-k_c^2\mathcal{Q}_D(\rho,\bm{w}_{22})-\dfrac{1}{2}k_c^2\mathcal{Q}_D( \bm{w}_{22},\rho),$\\
$\bm{G}_{3}=3k^2_c \mathcal{Q}_D^{d_c}( \rho,\bm{w}_{22}),$\\
Finally, by applying the solvability condition, we obtain the Stuart-Landau equation for the amplitude
\begin{equation*}
\dfrac{\partial A}{\partial T}=\sigma A-LA^3,
\end{equation*}
In addition, solvability of the equation $\eqref{order3}$ depends on $\langle \bm{G}, \psi \rangle=0$, whose coefficients  are given by\\
$$\sigma=-\dfrac{\langle \bm{G}_1^{(1)}, \psi\rangle}{\langle \rho, \psi\rangle},\; \; \bm{L}=\dfrac{\langle \bm{G}_1^{(3)}, \psi\rangle}{\langle \rho, \psi\rangle}, $$.

In the Stuart-Landau equation the coefficient $\sigma$ is always positive while L could be either negative or positive, corresponding to a subcritical or supercritical bifurcation.

The nontrivial solution of the amplitude equation is
\begin{equation}\label{L>0}
\bm{A}_{\infty}= \sqrt{\dfrac{\sigma}{L}},
\end{equation} 
which requires $L > 0$, and therefore the result of this analysis is valid only for supercritical bifurcations.
 
 Therefore, the asymptotic behavior of the solution is given by weakly nonlinear analysis of $O(\varepsilon^3)$ is:
 \begin{equation}\label{O3}
 \bm{W}=\varepsilon A \rho cos(k_cx)+\varepsilon^2A^2 [ \bm{w}_{20}+\bm{w}_{22}cos(2k_cx)]+O(\varepsilon^3). 
 \end{equation}
 \section{Numerical Results}\label{sec:num-results}
In this section, we exhibit numerical results. 
\subsection{Numerical methods}

The reaction-diffusion system was investigated numerically using a Galerkin spectral approximation. The spatial dependence of the variables was expanded in a truncated cosine Fourier series on the domain $x\in[0,2\pi/k]$, retaining four modes for each variable. Substituting these expansions into the governing equations and projecting onto the Fourier basis yields a system of eight ordinary differential equations for the modal amplitudes. The resulting system was integrated in time using MATLAB's \texttt{ode45} solver, which implements an adaptive Runge-Kutta (4,5) method with relative tolerance $10^{-6}$ and absolute tolerance $10^{-9}$.

 	\begin{figure}[h]
		\centering
		{\label{fig:Turing Pattern Model 1 for C_0}\includegraphics[width=0.48\textwidth]{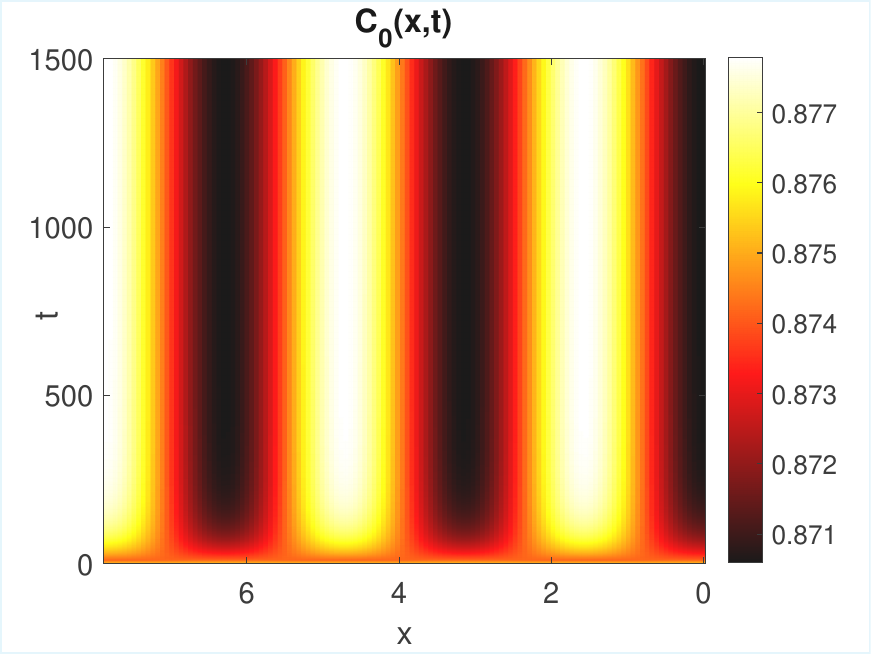}}
		\hspace{1mm}
		{\label{fig:fig:Turing Pattern Model 1 for C_1}\includegraphics[width=.48\textwidth]{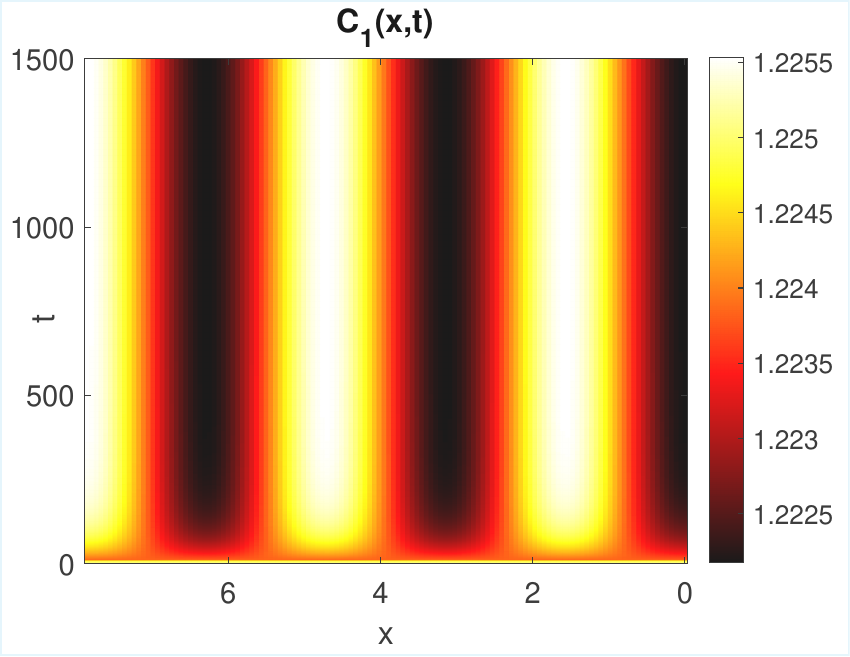}}
	\hspace{1mm}
		\caption{ \textit{Turing Pattern for given parameters: $n_1=3.5,\; n_2=1,\; k_1=k_2=2,\; \alpha_0=1,\; C_0^*=7,\; \beta=3,\; D_0=2,\; D_1=2.5$  that provides $C_0^E=0.87,\; C_1^E=1.22, \; k_c=2.8,\; d_c=9.51$.} }
		\label{fig:Turing Pattern Model 1}
		\hspace{1mm}
	\end{figure}
	\begin{figure}[h]
		\centering
		{\label{fig:Turing Pattern Model 1 for u_0}\includegraphics[width=0.48\textwidth]{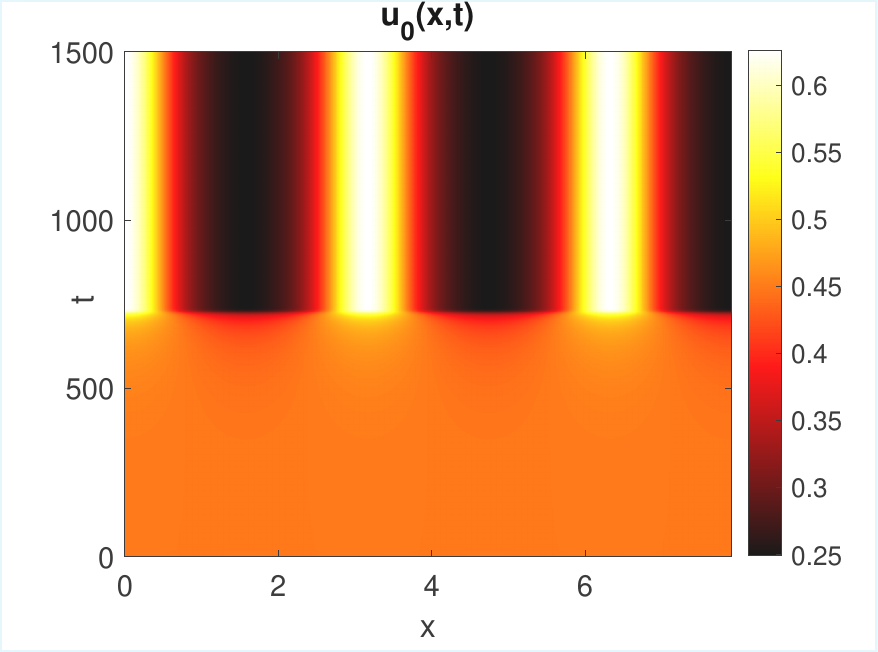}}
		\hspace{1mm}
		{\label{fig:fig:Turing Pattern Model 2 for u_1}\includegraphics[width=.48\textwidth]{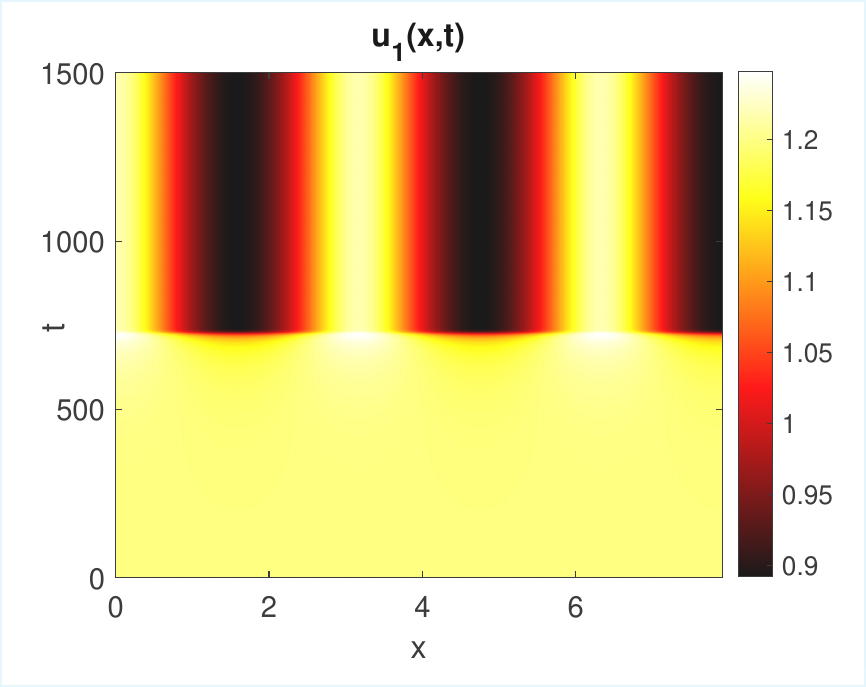}}
	\hspace{1mm}
		\caption{\textit{Turing Pattern for given parameters: $n_1=3.5,\; n_2=1,\; k_{a_1}=5,\; k_{cat}=8,\;k_{d_1}=2.5; k_2=2,\; \alpha_0=1,\; C_0^*=7,\; \beta=3,\; D_0=2,\; D1=2.5$  that provides $u_0^E=0.44,\; u_1^E=1.19, \; k_c=2.5,\; \gamma_c=14.76$.} }
		\label{fig:Turing Pattern Model 2}
		\hspace{1mm}
	\end{figure}
	In this formulation, diffusion terms appear as linear damping contributions proportional to the squared wavenumber, while the cross-diffusion term $C_0\nabla^2 C_1$ generates nonlinear coupling between Fourier modes in the reduced ODE system. The cosine basis automatically satisfies homogeneous Neumann (zero-flux) boundary conditions at the domain boundaries. Numerical reliability was verified by varying the perturbation amplitudes and integration times and by increasing the number of retained Fourier modes, which produced no qualitative change in the observed dynamics. Spatial solutions were reconstructed from the modal amplitudes to visualize the spatiotemporal evolution of the system.
Simulations were performed using the dimensional parameters.

 The numerical results clearly show that the primary model (Model~1) and the 
reduced model (Model~3) exhibit qualitatively similar Turing patterns, 
but with important quantitative differences in both amplitude and relaxation 
dynamics. In Model~1, the substrate–intermediate feedback is direct: the production 
of the intermediate $C_1$ depends instantaneously on the local availability of 
substrate $C_0$, and the curvature-driven cross-diffusion term $-d C_0 \nabla^2 C_1$ 
reinforces this coupling by transporting substrate toward regions where $C_1$ 
accumulates. As a result, the effective positive feedback loop between $C_0$ and 
$C_1$ is strong, leading to patterns with relatively large amplitude and fast 
convergence toward the stationary profile (Figure~\ref{fig:Turing Pattern Model 1}).

		\begin{figure}[H]
		\centering
		{\label{fig:Turing Pattern Model 1 for c_0}\includegraphics[width=0.48\textwidth]{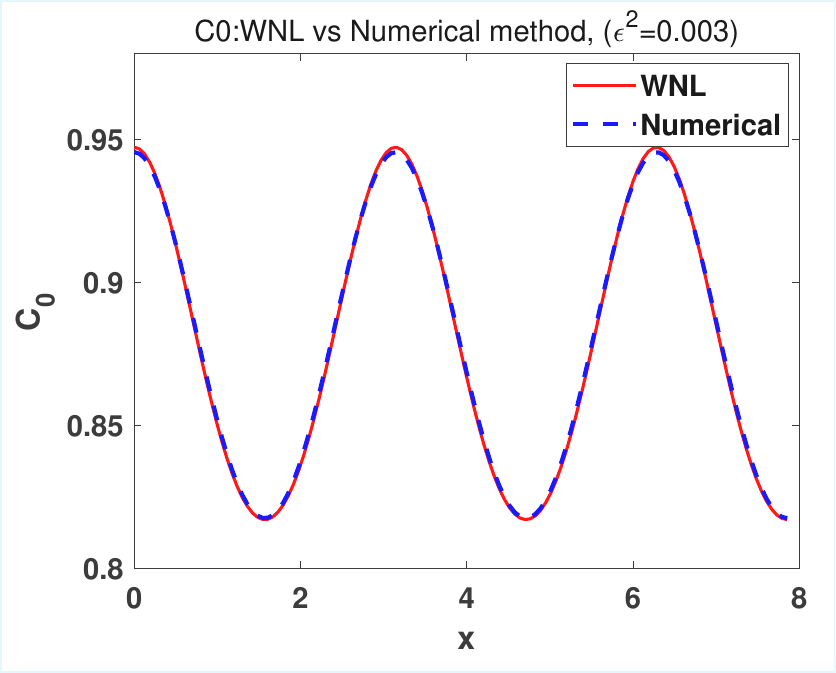}}
		\hspace{1mm}
		{\label{fig:fig:Turing Pattern Model 2 for c_1}\includegraphics[width=.48\textwidth]{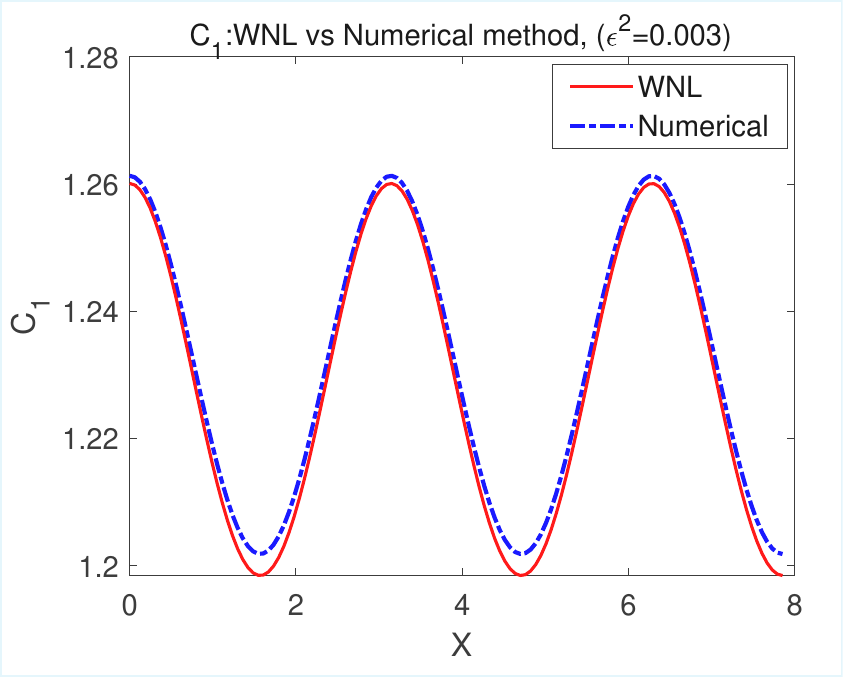}}
	\hspace{1mm}
		\caption{\textit{ Numerical results vs WNL of model 1 for given parameters: as Figure~\ref{fig:Turing Pattern Model 1}} and $\varepsilon^2=0.003$.  The WNL analysis shows s super critical Turing bifurcation since $\sigma$ and $L$ are both positive.  }
		\label{fig:Turing Pattern Model 1-WNLVsnum}
		\hspace{1mm}
	\end{figure}
	In contrast, the reduced model incorporates an explicit enzyme–substrate 
interaction, in which $S_0$ must first bind to the enzyme $E_1$ to form the complex 
$E_1S_0$ before being converted into the intermediate $S_1$. This reversible binding 
acts as a biochemical buffer: part of the substrate is temporarily stored in the 
complex, and the effective rate at which $S_0$ contributes to the production of $S_1$ 
is reduced to
$
\frac{k_{a_1} k_{\mathrm{cat}} n_1}{k_{d_1} + k_{\mathrm{cat}}} \; < \; k_1 n_1.
$
Consequently, the feedback between substrate and intermediate is weaker, and 
perturbations grow more slowly. This mechanistic buffering is visible in both the 
phase–plane trajectories and the numerical simulations: patterns in the 
reduced model have smaller amplitude, smoother spatial profiles, and longer 
relaxation times (Figure~\ref{fig:Turing Pattern Model 2}). 

Furthermore, the modified cross-diffusion structure of Model~3, where the coupling 
depends explicitly on the substrate concentration $u_0$, reduces the sensitivity of 
the system to spatial gradients. This further stabilizes the homogeneous state and 
explains why the Turing region of Model~3 is narrower and shifted toward higher 
diffusion or cross-diffusion values compared with Model~1.

The excellent agreement between numerical simulations and the weakly nonlinear 
predictions (Figures~\ref{fig:Turing Pattern Model 1-WNLVsnum}-\ref{fig:Turing Pattern Model 2-WNLVsnum}) confirms that these differences arise from fundamental 
kinetic mechanisms rather than numerical artifacts. Overall, the numerical results 
demonstrate that explicit enzyme–substrate binding attenuates the strength of the 
feedback necessary for Turing instability, leading to weaker and more finely regulated 
spatial patterns. Biologically, this means that enzymatic complex formation acts as a 
natural buffering process that prevents excessive accumulation of intermediates and 
ensures more stable metabolic organization.
		\begin{figure}[H]
		\centering
		{\label{fig:Turing Pattern Model 1 for u_0}\includegraphics[width=0.48\textwidth]{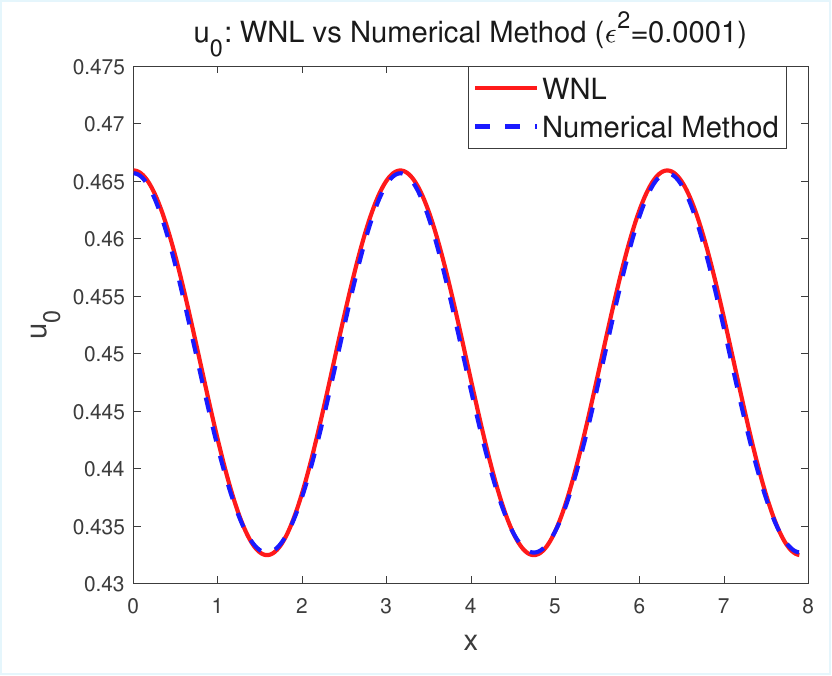}}
		\hspace{1mm}
		{\label{fig:fig:Turing Pattern Model 2 for u_1}\includegraphics[width=.48\textwidth]{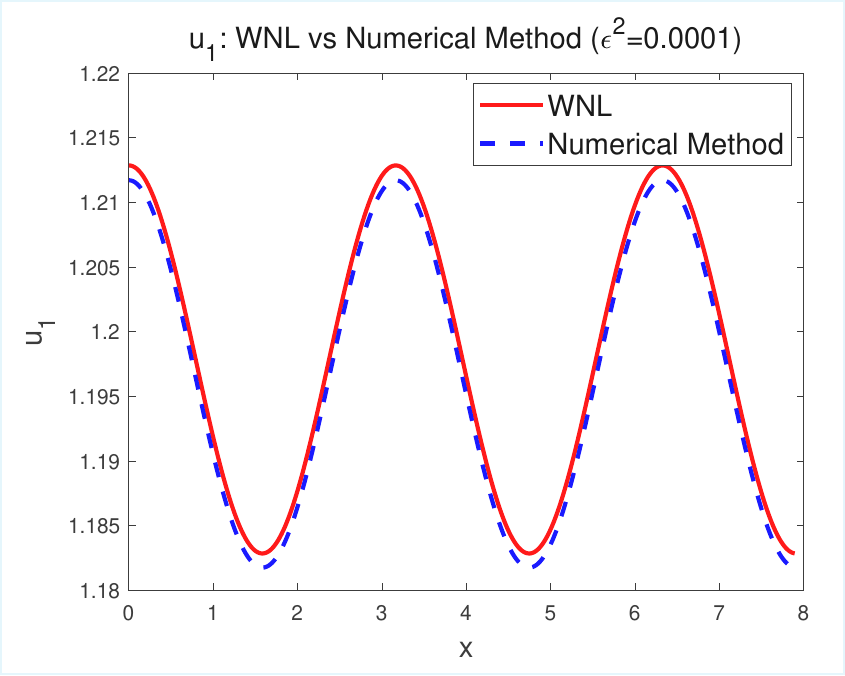}}
	\hspace{1mm}
		\caption{\textit{ Numerical results vs WNL of model 2 for given parameters: as Figure~\ref{fig:Turing Pattern Model 2}  and $\varepsilon^2=0.0001$. The WNL analysis shows s super critical Turing bifurcation since $\sigma$ and $L$ are both positive.  }}
		\label{fig:Turing Pattern Model 2-WNLVsnum}
		\hspace{1mm}
	\end{figure}		
\section{Conclusion}\label{sec:conclu}

This work examined how enzymatic reaction kinetics and diffusion-driven
transport interact to generate spatial metabolic organization. Linear
stability and Turing analyses showed that diffusion and curvature-driven
cross-diffusion can destabilize otherwise stable homogeneous steady states,
producing spatial patterns through local enzyme--substrate feedback.
Incorporating explicit enzyme--substrate complex formation introduces
additional kinetic structure that modifies reaction balance and effective
transport, thereby shifting the Turing threshold and altering the size and
location of the instability region. The resulting reduced 
formulation removes the structural degeneracy associated with a manifold of
equilibria and enables a complete weakly nonlinear (WNL) analysis while
preserving the essential nonlinear coupling between catalysis and spatial
organization.

Theoretical predictions from the WNL framework are strongly supported by
numerical simulations, which show excellent agreement in both pattern
amplitude and spatial profile. Comparison between the simplified and
reduced models reveals systematic kinetic and spatial differences:
explicit enzyme--substrate binding shifts the homogeneous equilibrium toward
higher intermediate concentrations, slows relaxation toward steady state,
and leads to spatial patterns of reduced amplitude. These results highlight
the buffering role of reversible complex formation and demonstrate how
enzymatic binding kinetics regulate the strength of the feedback loop driving
diffusion-driven instability.

Beyond their mathematical implications, these findings admit a natural
interpretation in the context of liquid--liquid phase separation (LLPS).
Within biomolecular condensates, enzymes and metabolites experience altered
mobility, molecular crowding, and transient binding interactions that jointly
modulate catalytic flux and spatial organization. The simplified model
captures an effective, coarse-grained description of enzymatic feedback,
whereas the reduced formulation reflects a buffered regime in which
reversible complex formation attenuates spatial feedback. The resulting
differences in pattern amplitude and stability mirror known biophysical roles
of LLPS in regulating metabolite gradients, enhancing reaction efficiency,
and preventing uncontrolled accumulation of intermediates.

Overall, this study establishes a quantitative link between enzymatic binding
kinetics, diffusion-driven instabilities, and mesoscale spatial organization
such as LLPS. The framework developed here can be extended to more complex
metabolic pathways, multi-enzyme assemblies, and other compartmentalized
biochemical systems, providing a versatile foundation for understanding how
molecular self-organization shapes cellular metabolism and morphogenesis.

The mechanistic structure of the full enzymatic model suggests rich reduced dynamics near criticality. 
	\section{Acknowledgments}
	 The author warmly thanks Prof. Stepan Timr (J. Heyrovský Institute of Physical Chemistry,
Czech Academy of Sciences)
 for his constant support, stimulating discussions, and many helpful suggestions that greatly improved the quality of this work. FF is also grateful for support from the National Group of Mathematical Physics (GNFM-INdAM).

\section*{Statements and Declarations}
\textbf{Competing interests:}  
The author declares that there are no competing interests.\\
\textbf{Data availability:}  
No new experimental data were generated in this study. All simulation codes and any numerical data supporting the findings are available from the corresponding author on reasonable request.\\
\textbf{Funding:}  
No external funding was received for this research.\\
\textbf{Ethics approval and consent to participate:}  
Not applicable.\\
\textbf{Authors’ contributions:}  
The author  developed the mathematical model, performed all analyses and simulations, and wrote the entire manuscript.

\bibliographystyle{plain}  
\bibliography{ReferencesEnzyme}  


\end{document}